\documentclass[conference]{IEEEtran}
\IEEEoverridecommandlockouts
\pdfoutput=1 
\usepackage{cite}
\usepackage{amsmath,amssymb,amsfonts,amsthm,nccmath,amstext}
\usepackage{algorithm}
\usepackage{graphicx}
\usepackage{textcomp}
\usepackage{ulem}
\usepackage{xcolor}
\def\BibTeX{{\rm B\kern-.05em{\sc i\kern-.025em b}\kern-.08em
    T\kern-.1667em\lower.7ex\hbox{E}\kern-.125emX}}
    \usepackage{multirow}

\usepackage{threeparttable}
\usepackage{algorithmicx}
\usepackage{algpseudocode}
\usepackage{color}
\usepackage{pifont}
\usepackage{subcaption}
\usepackage{hyperref}

\newcommand{\xmark}{\ding{55}}
\usepackage{microtype}
\usepackage{wallpaper}
\usepackage{lipsum}
\theoremstyle{definition}
\newtheorem{definition}{Definition}[]
\newtheorem{remark}{Remark}[]
\newtheorem{theorem}{Theorem}[]
\newtheorem{corollary}{Corollary}[theorem]
\newtheorem{lemma}[theorem]{Lemma}
\newenvironment{sproof}{%
  \proof}{\endproof}

\begin{document}

\title{SPOT: Secure and Privacy-preserving prOximiTy protocol for e-healthcare systems}

\author{\IEEEauthorblockN{Souha Masmoudi}
\IEEEauthorblockA{
\textit{Samovar, Telecom SudParis,} \\
\textit{Institut Polytechnique de Paris} \\
Palaiseau, France \\
\textit{Member of the Chair Values and} \\
\textit{Policies of Personal Information} \\
ORCID: 0000-0002-7194-8240}
\and
\IEEEauthorblockN{Nesrine Kaaniche}
\IEEEauthorblockA{
\textit{Samovar, Telecom SudParis,} \\
\textit{Institut Polytechnique de Paris} \\
Palaiseau, France \\
\textit{Member of the Chair Values and} \\
\textit{Policies of Personal Information} \\
ORCID: 0000-0002-1045-6445}
\and
\IEEEauthorblockN{Maryline Laurent}
\IEEEauthorblockA{
\textit{Samovar, Telecom SudParis,} \\
\textit{Institut Polytechnique de Paris} \\
Palaiseau, France \\
\textit{Member of the Chair Values and} \\
\textit{Policies of Personal Information} \\
ORCID: 0000-0002-7256-3721}

}

\maketitle

\begin{abstract}
This paper introduces \textsf{SPOT}, a Secure and Privacy-preserving prOximity based protocol for e-healthcare systems. It relies on a distributed proxy-based approach to preserve users' privacy and a semi-trusted computing server to ensure data consistency and integrity. 
The proposed protocol ensures a balance between security, privacy and scalability.
As far as we know, in terms of security, \textsf{SPOT} is the first one to prevent malicious users from colluding and generating false positives. In terms of privacy, \textsf{SPOT} supports both anonymity of users being in proximity of infected people and unlinkability of contact information issued by the same user. 
A concrete construction based on structure-preserving signatures and NIWI proofs is proposed and a detailed security and privacy analysis proves that \textsf{SPOT} is secure under standard assumptions.
In terms of scalability, \textsf{SPOT}'s procedures and algorithms are implemented to show its efficiency and practical usability with acceptable computation and communication overhead.
\end{abstract}

\begin{IEEEkeywords}
Anonymity, e-healthcare, NIWI proofs, Privacy, Structure-preserving signature, Unlinkability.
\end{IEEEkeywords}

\section{Introduction}
\label{sec:introduction}
The recent health crisis has led governments to apply different tracing solutions to control the contamination chain among the population. 
These solutions are aimed at sharing valuable data while preserving users' privacy. However, there are still privacy threats and robustness challenges as long as users are required to disclose and share correct sensitive and personal information with different third parties with various levels of trust. 

Most of the solutions rely on the Bluetooth technology, namely Bluetooth Low Energy (BLE), to exchange contact information, thanks to its efficiency in active communications \cite{reichert2020survey}.
Among these governmental solutions, the \textsf{TraceTogether} application \cite{TraceTogether} has been launched by Singapore. \textsf{TraceTogether} enables to collect, via the Bluetooth technology, temporary IDs (generated by a central trusted server) of users in close proximity. Collected IDs are stored in an encrypted form using the server’s public key, at users' devices, and in case of infection, they are shared with the server.
The \textsf{COVIDSafe} application \cite{COVIDSafe} from the Australian government is another Bluetooth-based solution. It also logs encrypted users' contact information, and share them once an infection is detected. The server is required to alert users at risk without revealing the identity of the infected user. Both \textsf{TraceTogether} and \textsf{COVIDSafe} applications are set upon a centralized architecture.
Many other applications like \textsf{Stop COVID-19} (Croatia) \cite{stopcovid19}, \textsf{CA Notify} (California) \cite{canotify} rely on the \textsf{Google and Apple Exposure Notification (GAEN)} service \cite{applegoogle}, which is set upon a decentralized architecture. Although contact tracing applications have helped governments to alleviate the widespread of the pandemic by automating  the manual contact tracing done by health authorities, they raise critical privacy concerns, namely users tracking and identification \cite{leith2021contact}.

Academic solutions have been also proposed to support both centralized \cite{robert} and decentralized \cite{dp3t,Chan2020PACTPP,pact2,covid2020zk,pietrzak2020} architectures. However, each architecture has his merits and limits in terms of security and privacy. Indeed, using centralized solutions, users guarantee the reception of correct alerts as long as the generation of users' contact tokens and the verification in case of infection are performed by a centralized server. This guarantee is compensated with threats to users' privacy, i.e., users are exposed to tracking and identification of their contact lists. Decentralized solutions have been proposed to mitigate these privacy threats. Users are responsible for generating their contact tokens in order to ensure their privacy and anonymity, but, they are not prevented from forging contact information, which results in high level of false alerts. To get the best of both architectures, hybrid architecture based solutions \cite{desire,hoepman2021hansel} have been proposed. However, security and privacy requirements, like the correctness of contact information and the anonymity of contacted users, are not yet handled and ensured together.

In this paper, we present \textsf{SPOT}, a novel hybrid Secure and Privacy-preserving prOximity-based protocol for e-healthcare systems. It combines a decentralized proxy front-end architecture, ensuring both users' anonymity and contact information integrity, and a centralized back-end computing server, guaranteeing a real time verification of contact information integrity. \textsf{SPOT} assumes that two users in close proximity rely on their Ephemeral Bluetooth IDentifiers (EBID) to compute a common contact message. 
This message is relayed to a central server through a group of proxies. 
With the help of the computing server and relying on a proof-based group signature, \textsf{SPOT} prevents users from forging their contact lists. 
The signed contact messages are given to the user to be locally stored. 
In case of a detected infection, the user consents to share his contact list, i.e., a set of signed contact messages, with the health authority. This latter checks the correctness of the received list and shared it back with all the involved users, if the verification holds. The contributions of this paper are summarized as follows:

\begin{itemize}
    \item we design a proximity protocol for e-health services that prevents the injection of false positives, i.e., alert users to be at risk when they are not. \textsf{SPOT} enforces the verification of the correctness and the integrity of users' contact information by health authorities, thanks to the support of both a computing server and a group of proxies.
    \item we guarantee strong privacy properties namely the anonymity of users being in contact with infected people, and the unlinkability of users' transactions when relying on random EBIDs that can neither be linked to each other nor be linked to their issuer.
    \item we propose a concrete construction of the \textsf{SPOT} protocol relying on a structure-preserving signature scheme \cite{XCGSIG} that supports securely signing group's elements, i.e., contacts' information. This signature scheme is coupled with Groth-Sahai Non-Interactive Witness-Indistinguishable (NIWI) proof \cite{grothsahai} in order to ensure integrity of proxies' keys. Indeed, NIWI proofs guarantee the anonymity of proxies while the health authority is still able to verify that proxies are trustful by verifying the validity of their keys without having access to them.
    \item we evaluate the performances of \textsf{SPOT} through the full implementation of different procedures and algorithms. The conducted experiments have shown acceptable computation times proving the efficiency and practical usability of the proposed solution.
\end{itemize}

This paper is organized as follows. 
Section \ref{sec:relatedwork} introduces and compares most closely-related proximity tracing algorithms and solutions to \textsf{SPOT}. Section \ref{sec:spec} gives an overview of \textsf{SPOT}. After introducing the underlying building blocks in Section \ref{sec:blocks}
The concrete construction of the proposed \textsf{SPOT} protocol is presented in Section \ref{sec:constr}. Section \ref{sec:secanal} and Section \ref{sec:perf} provide security and privacy properties and a detailed discussion of \textsf{SPOT}'s conducted experiments, before concluding in Section \ref{sec:conclusion}.

\section{Related Work}
\label{sec:relatedwork}

Recently, several industrial and research contact tracing solutions have been proposed for e-health applications \cite{surveycovid,reichert2020survey}. These solutions aim at ensuring security properties, namely \textbf{anti-replay} mitigating the multi-submission of the same contact information, and \textbf{unforgeability} preventing malicious entities\footnote{Malicious entities involve either a single malicious adversary or colluding adversaries.} from threatening data integrity. Privacy properties have been significantly addressed, including the \textbf{anonymity} of end-users and the \textbf{unlinkability} between their different transactions (i.e., a formal definition of security and privacy requirements is given in Section \ref{sec:thrmodel}).

Indeed, researchers from Inria, France, and Fraunhofer, Germany proposed \textsf{Robert} \cite{robert}, a contact tracing protocol that relies on a centralized architecture, where a central server delivers pseudonyms to users. Each user collects pseudonyms of users in close proximity and shares them with the server when being infected. In such centralized solution, users are sure that they receive correct alerts (i.e., collected pseudonyms are neither replayed nor falsified by malicious entities), however, their privacy is threatened as long as the server is able to identify users' contacts and to track them.
In \cite{dp3t}, Troncoso \emph{et al.} introduced the Decentralized Privacy-Preserving Proximity Tracing (\textsf{DP-3T}) solution which is one of the most popular contact-tracing protocols. \textsf{DP-3T} has been proved to mitigate the privacy threats of centralized solutions as there is no need for a central entity which collects users' contact information with the risk of tracking them. However, it does not prevent relay and replay attacks and gives no mean to verify the correctness of contact information. Thus, users are exposed to false alerts from malicious entities either by creating falsified information or replaying information of previous sessions.
Afterwards, Castelluccia \emph{et al.} proposed \textsf{Desire} \cite{desire}, a proximity tracing protocol that leverages the advantages of the centralized and decentralized solutions. However, some security and privacy issues have not been considered in this solution. First malicious users are able to collude and merge their contact lists, which leads to false positive injection attacks. Second, the server requested to compute the exposure status and risk, is able to link users' requests, and to de-anonymize them. 
Two very similar proposals called \textsf{PACT}s are also introduced. The east coast \textsf{PACT} \cite{pact2} and the west coast \textsf{PACT} \cite{Chan2020PACTPP} are very close to \textsf{DP-3T}. The two solutions rely on random pseudonyms derived from a private seed, that are broadcasted to users in proximity via Bluetooth. The pseudonyms are generated using cryptographic pseudorandom generators and pseudorandom functions. Apart from the non-resistance against replay attacks, these two proposals give no mean to check the correctness of the contact information before being broadcasted. Pietrzak \cite{pietrzak2020} proposed a decentralized contact-tracing solution to mitigate replay attacks against \textsf{DP-3T}. However, privacy concerns are raised, namely tracking users, as geo-location and time of contacts are shared within the Bluetooth message. In \cite{hoepman2021hansel}, Hoepman proposed two tracing protocols, the first one relies on an interactive session between two users in proximity to register contact information. If the interaction fails, the contact is not registered. Thus, the second protocol comes to mitigate this risk of failure and relies on an authority that relays information between users. In both protocols, the identities of users who have been in contact with an infected person, are revealed to a central entity, which contradicts the defined anonymity requirement. 
Liu et al. \cite{covid2020zk} use zero-knowledge proofs and group signatures in order to preserve users' privacy for their proposed tracing protocol. Zero-knowledge proofs are generated by users over the contact information they collected. Indeed, users prove the contacts to their doctors without revealing the information.
Afterwards, doctors, being members of a group, sign the proofs and publish them in a public board. Then, relying on their secrets, users can check if they were in contact with infected people. As such, no entity can identify the contacts of an infected user. However, based on a long interactive protocol between two devices, the collection of contacts' information may result in a failed interaction, thus causing the non-registration of the contact. Furthermore, the authors only consider honest but curious adversaries, which leads to possible false alerts due to malicious colluding users. 
It is also worth noticing that, unlike \textsf{SPOT} and other related work \cite{desire,dp3t,pietrzak2020}, \cite{covid2020zk} assume that all the computations (handshake, zero-knowledge, verification) are performed by the user's device, which leads to device's battery depletion. A new contact tracing strategy based on online social networking is proposed in \cite{TraceMe2022} but does not provide privacy guarantees.
Table \ref{tab:comp} provides a comparative summary between \textsf{SPOT} and related works in terms of architecture settings and properties. As shown, \textsf{SPOT} is the only solution which supports strong security and privacy requirements.

\begin{table*}[!ht]
\begin{center}
\caption{Comparison between \textsf{SPOT} and related works}

\begin{threeparttable}
\small
\begin{tabular}{c|c|c|c|c|c|c|c|c|c}
\hline
     & &  \textsf{SPOT} & \cite{covid2020zk} & \cite{hoepman2021hansel} & \cite{dp3t} & \cite{pact2} \cite{Chan2020PACTPP} & \cite{robert}  & \cite{desire} & \cite{pietrzak2020} \\ \hline
      \multirow{3}{*}{Architecture}  & Centralized & - & - & - & - & - & \checkmark & - & - \\ \cline{2-10}
    & Decentralized & - & \checkmark & - &  \checkmark & \checkmark & - & - & \checkmark \\ \cline{2-10} 
    & Semi-centralized & \checkmark & - & \checkmark &   - & - & -  & \checkmark  &   - \\  \hline
    \multirow{4}{*}{Properties} & Unforgeability  \tnote{a} & \checkmark & \checkmark \tnote{b} & \checkmark \tnote{b} & \xmark & \xmark & N.A. & \xmark & \xmark  \\ \cline{2-10}
    & Anti-replay & \checkmark & \checkmark & \checkmark  & \xmark  & \xmark & \checkmark & \checkmark & \checkmark \\ \cline{2-10}
    & Unlinkability & \checkmark &  N.A. & \xmark  & \checkmark & \checkmark  & \xmark & \checkmark & \xmark \\  \cline{2-10} 
    & Anonymity & \checkmark & \checkmark & \xmark & \xmark & \xmark & \xmark & \checkmark & \xmark \\ \hline 
   
  \end{tabular}
\end{threeparttable}
 \label{tab:comp}
\end{center}
\scriptsize NOTE: N.A. is the abbreviation for Not Applicable; $^a$ indicates that the unforgeability implies the integrity of users' contact information and the prevention of false positives injection; $^a$ indicates that unforgeability is partially satisfied while not considering malicious colluding entities. 
\end{table*}

\section{SPOT Overview}
\label{sec:spec}

This section first presents the involved entities and gives an overview of \textsf{SPOT}. Then, it details the system model with its procedures and algorithms and defines the threat model through formal security games. 

\subsection{Entities}
\label{sec:sysmodel}

\begin{figure*}[!t]
\centering
\includegraphics[width=12cm,height=8.4cm]{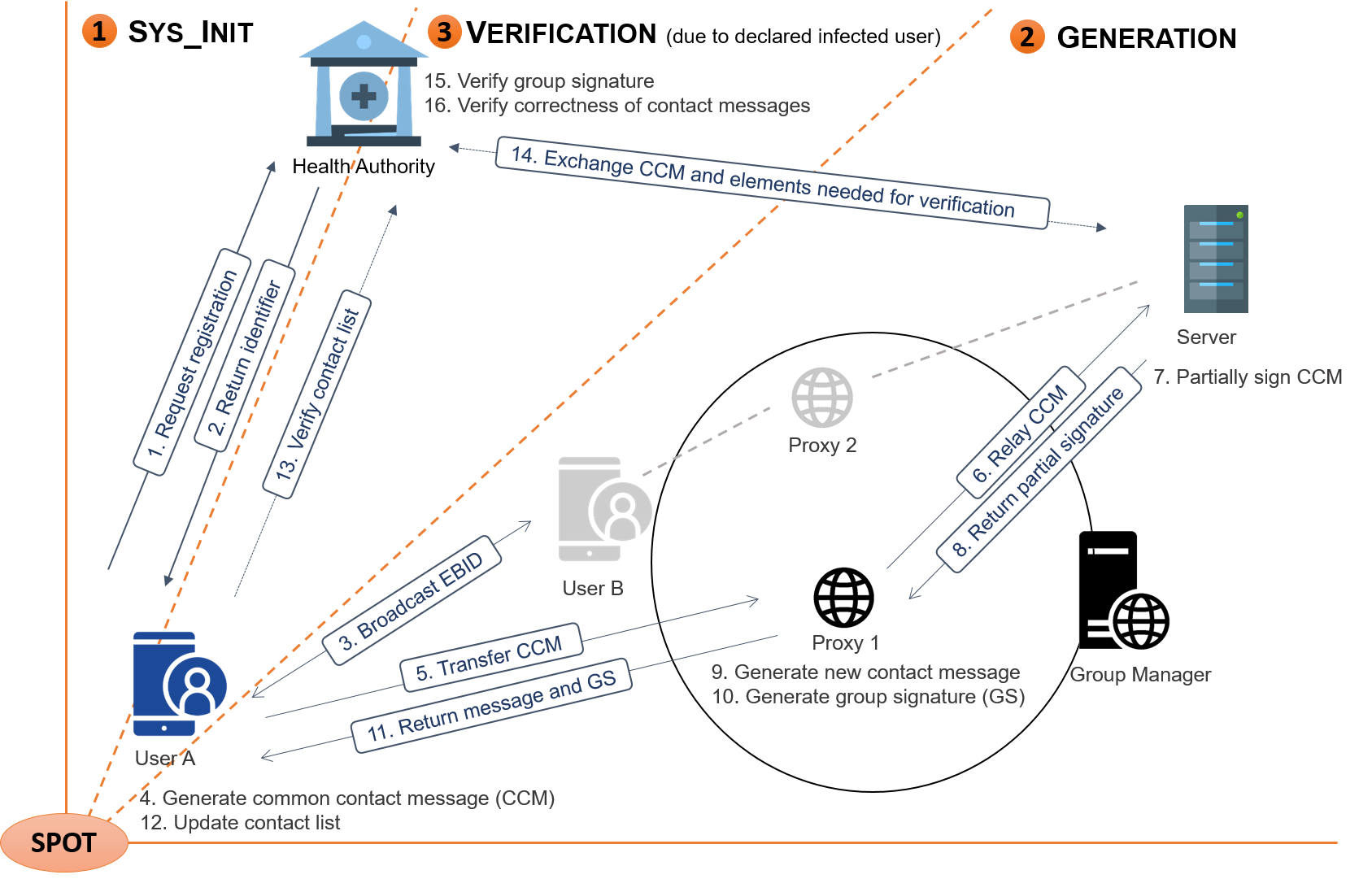}
\caption{Overview of the SPOT protocol}
\label{fig:globalarchi}
\end{figure*}

Figure \ref{fig:globalarchi} depicts the four main actors involved in \textsf{SPOT} with their interactions according to the different phases. Four actors are defined as follows:

\begin{itemize}
    \item The user ($\mathcal{U}$) represents the entity that owns the device where the proximity-tracing application is installed. 
   During the \textsc{Generation} phase, $\mathcal{U}$ broadcasts his EBID (Ephemeral Bluetooth IDentifier), collects the EBIDs of other users in proximity, and computes a common contact message shared between each two users being in contact. $\mathcal{U}$ wants to receive alerts if he was in contact with confirmed cases.
    \item The Health Authority ($\mathcal{HA}$) is responsible for issuing users' identifiers during the \textsc{Sys\_Init} phase, and for checking the correctness of the contact messages provided by an infected user during the \textsc{Verification} phase. 
    \item The Server ($\mathcal{S}$) is responsible for anonymously collecting and storing users' contact messages relayed by proxies during the \textsc{Generation} phase\footnote{The Server can be distributed by considering one or several servers per geographical area, each server participating in locally storing part of users' contact messages databases. All the parts are then collected on offline in a centralized server. Thus, for ease of presentation, we consider only one server.}. $\mathcal{S}$ performs a real-time verification of the received contacts during the \textsc{Generation} phase, in order to help $\mathcal{HA}$ to verify the correctness and integrity of the contact messages.
    \item The Proxy ($\mathcal{P}$) is considered as a member of a group of proxies managed by the group manager ($\mathcal{GM}$)\footnote{Proxies are distributed over several geographical areas. We assume that a load-balancing is established between at least two proxies in the same geographical area to ensure the system availability in case of failure or overload. More precisely, proxies in a same geographical area are separated into two subsets - a primary and a secondary - and two users in a contact interaction must contact proxies belonging to different subsets in order to prevent a proxy from gaining too much knowledge about users' interactions.}. Proxies form an intermediate layer by relaying the common contact messages of users to $\mathcal{S}$ in order to ensure the anonymity of involved users towards the server during the \textsc{Generation} phase. Proxies also play an important role in ensuring data integrity and user geolocation privacy thanks to group signatures. 
\end{itemize}

\subsection{Overview}

\textsf{SPOT} is set upon an hybrid architecture that leverages the best of the centralized and decentralized settings in proximity-tracing protocols. It relies on a proxy-based solution to preserve users' privacy (i.e., users remain anonymous towards the server, thus preventing users' tracking) and is based on a semi-trusted computing server to ensure data consistency and integrity (i.e., users are ensured that the received alerts are correct). The architecture of the proposed protocol is depicted in Figure \ref{fig:globalarchi}.
\textsf{SPOT} involves three main phases: \textsc{Sys\_Init}, \textsc{Generation} and \textsc{Verification} presented hereafter.

The \textsc{Sys\_Init} phase consists of initializing the whole system. It relies on seven algorithms, referred to as $\mathsf{Set\_params}$, $\mathsf{HA\_keygen}$, $\mathsf{S\_keygen}$, $\mathsf{Setup\_ProxyGr}_{\mathcal{GM}}$ and $\mathsf{Join\_ProxyGr}_{\mathcal{P}/\mathcal{GM}}$, $\mathsf{Set\_UserID}_{\mathcal{HA}}$ and $\mathsf{Userkeygen}_{\mathcal{U}}$. During the \textsc{Sys\_Init} phase, a trusted authority\footnote{For ease of presentation, the trusted authority is neither presented in Figure \ref{fig:globalarchi} nor in the system's model entities.} generates the system public parameters published to all involved entities and the pair of keys of both $\mathcal{HA}$ and $\mathcal{S}$, relying on $\mathsf{Set\_params}$, $\mathsf{HA\_keygen}$ and $\mathsf{S\_keygen}$ algorithms. During this phase, the group manager defines the group of proxies. It generates the group signature parameters using the $\mathsf{Setup\_ProxyGr}_{\mathcal{GM}}$ algorithm and it interacts with each group member to derive the associated keys relying on the $\mathsf{Join\_ProxyGr}_{\mathcal{P}/\mathcal{GM}}$ algorithm. The Health Authority is also involved in this phase to register a user when installing the proximity-tracing application. $\mathcal{HA}$ generates a specific secret value $t_\mathcal{U}$ (only known by $\mathcal{HA}$) and a unique identifier $\mathtt{ID}_\mathcal{U}$ for each user ($\mathcal{U}$), using the $\mathsf{Set\_UserID}_{\mathcal{HA}}$ algorithm. Finally, $\mathcal{U}$ uses his identifier to generate his pair of keys relying on the $\mathsf{Userkeygen_{U}}$ algorithm. The user's identifier $\mathtt{ID}_\mathcal{U}$, secret value $t_\mathcal{U}$ and public key are stored in a database $DB_{USER}$ owned by $\mathcal{HA}$. We note that the trusted authority, the group manager and proxies are involved only once in the \textsc{Sys\_Init} phase, while the health authority must intervene every time a user wants to register.

The \textsc{Generation} phase occurs when two users $\mathcal{U}_A$ and $\mathcal{U}_B$ are in contact. It represents the process of generating contact messages and contact lists for users. Three main entities participate in this phase relying on three different algorithms, referred to as $\mathsf{Set\_CCM}_{\mathcal{U}}$, $\mathsf{S\_PSign}_{\mathcal{S}}$ and $\mathsf{P\_Sign}_{\mathcal{P}}$. At first, $\mathcal{U}_A$ and $\mathcal{U}_B$ execute the $\mathsf{Set\_CCM}_{\mathcal{U}}$ algorithm to generate a common contact message relying on their random $EBID$s (denoted by $\mathtt{D}^e_A$ and $\mathtt{D}^e_B$) for an epoch $e$\footnote{An epoch $e$ denotes a period of time in which the Bluetooth identifier (EBID) remains unchanged.}. 
$\mathcal{U}_A$ and $\mathcal{U}_B$ choose two different proxies to relay their common contact message to the server. For this purpose, they compare their $EBID$s, i.e., if $\mathtt{D}^e_A$ $>$ $\mathtt{D}^e_B$, $\mathcal{U}_A$ chooses the first proxy and $\mathcal{U}_B$ selects the second one, and vice versa. 
Each of the two proxies relays the common contact message to the server. 
$\mathcal{S}$ checks if the two copies are similar. 
If so, $\mathcal{S}$ executes the $\mathsf{S\_PSign}_{\mathcal{S}}$ algorithm to partially sign the common contact message, thus proving that the contact message correctly reached the server. 
Afterwards, given back only a correct message, each proxy executes the $\mathsf{P\_Sign}_{\mathcal{P}}$ algorithm.
Indeed, each proxy extends the message, given by $\mathcal{S}$, with the corresponding user's identifier and it signs the resulting message on behalf of the group.
He, finally, sends back the message and the corresponding group signature to the user and closes the communication session, while removing all the exchanged and generated contact information. The user adds the group signature, along with the common contact message, the date, time and duration of contact, in his contact list $CL_\mathcal{U}$. Note that each contact information is stored for $\Delta$ days.

The \textsc{Verification} phase is run by the health authority to check the correctness of a contact list $CL_\mathcal{U}$ provided by $\mathcal{U}$ during a period of time $t$. To this end, $\mathcal{HA}$ performs three successive verifications relying on two main algorithms, referred to as $\mathsf{Sig\_Verify}_{\mathcal{HA}}$ and $\mathsf{CCM\_Verify}_{\mathcal{HA}}$. (i) $\mathcal{HA}$ checks if, in his $DB_{USER}$ database, $\mathcal{U}$ is infected\footnote{We suppose that the health status of a user is updated when being tested. Indeed, to be tested, $\mathcal{U}$ has to provide an encrypted form of his identifier $\mathtt{ID}_\mathcal{U}$  (i.e., $\mathtt{ID}_\mathcal{U}$ is encrypted meaning the $\mathcal{HA}$ public key). Afterwards, the analysis' result is sent with the encrypted identifier to $\mathcal{HA}$, that extracts the identifier and updates the user's health status in the $DB_{USER}$ database.}. (ii) $\mathcal{HA}$ checks the validity of the group signatures relying on the $\mathsf{Sig\_Verify}_{\mathcal{HA}}$ algorithm, w.r.t. the messages contained in the contact list $CL_{\mathcal{U}}$. (iii) $\mathcal{HA}$ verifies that the contact messages have been correctly generated and have successfully reached $\mathcal{S}$, using the $\mathsf{CCM\_Verify}_{\mathcal{HA}}$ algorithm. 

It is worth mentioning that if one of the verifications given above fails, the contact message is rejected. Otherwise, $\mathcal{HA}$ collects all verified messages of all infected users in a set $S_{\mathtt{CCM}}$ that she signs. Note that for each period of time $t$, $\mathcal{HA}$ removes users' contact lists after verifications. $S_{\mathtt{CCM}}$ and the corresponding signature are sent to the server that shares them with all users. To compute the risk score, each user compares the set $S_{\mathtt{CCM}}$ with his contact list, taking into account the number of infected users being contacted and the contact duration.

For ease of presentation, the different notations used in this paper are depicted in Table~\ref{tab:nota}.

\begin{table}[!ht]
\begin{center}
\caption{Notations used in this paper}
\resizebox{\columnwidth}{!}{
\begin{tabular}{c|c}
\hline
    Notation & Description \\ \hline
    \hline
    $\mathcal{U}$  & User \\ \hline
    $\mathcal{HA}$  & Health Authority \\ \hline
    $\mathcal{S}$ & Server \\ \hline
    $\mathcal{P}$  & Proxy \\ \hline
    $\mathcal{GM}$  & Group Manager \\ \hline
    $\mathcal{TA}$ & Trusted Authority \\ \hline 
    $\mathtt{ID}_{\mathcal{U}}$ & An identifier of a user $\mathcal{U}$ \\ \hline
    $t_{\mathcal{U}}$ & A secret value associated to $\mathcal{U}$ \\ \hline
    $DB_{USER}$ & The users' database at $\mathcal{HA}$ \\ \hline
    $\mathtt{D}^e$ & An ephemeral Bluetooth Identifier during an epoch $e$ \\ \hline
    $CL_{\mathcal{U}}$ & A user's contact list \\ \hline
    $\lambda$ & A security parameter \\ \hline
    $pp$ & The system's public parameters \\ \hline
    $\mathtt{sk}$ & A private key \\ \hline
    $\mathtt{pk}$ & A public key \\ \hline 
    $\mathtt{vk}_g$ & The group public parameters \\ \hline 
    $\mathtt{CCM}$ & A common contact message \\ \hline 
    $(\mathtt{PS}, \mathtt{PS}')$ & A partial signature \\ \hline 
    $\sigma$ & A signature \\ \hline 
    $\mathtt{M}$ & A message derived from $\mathtt{PS}$ and $\mathtt{ID}_{\mathcal{U}}$ \\ \hline 
    $\pi$ & A NIWI proof \\ \hline
    $S_{\mathtt{CCM}}$ & A set of verified $\mathtt{CMM}$s of infected users \\ \hline
  \end{tabular}}
 \label{tab:nota}
\end{center}
\end{table}

\subsection{System Model}

Based on the three phases, Figure \ref{fig:seqdiag} presents the chronological sequence of twelve PPT algorithms, defined below. For ease of presentation, we consider only one user and one proxy in the sequence diagram. For the \textsc{Generation} phase, we suppose that two users have been in contact and exchanged their EBIDs. The \textsc{Verification} phase occurs only if the user receives a negative analysis' result.

\begin{figure*}[!t]
\centering
\includegraphics[ width=14cm,height=9.7cm]{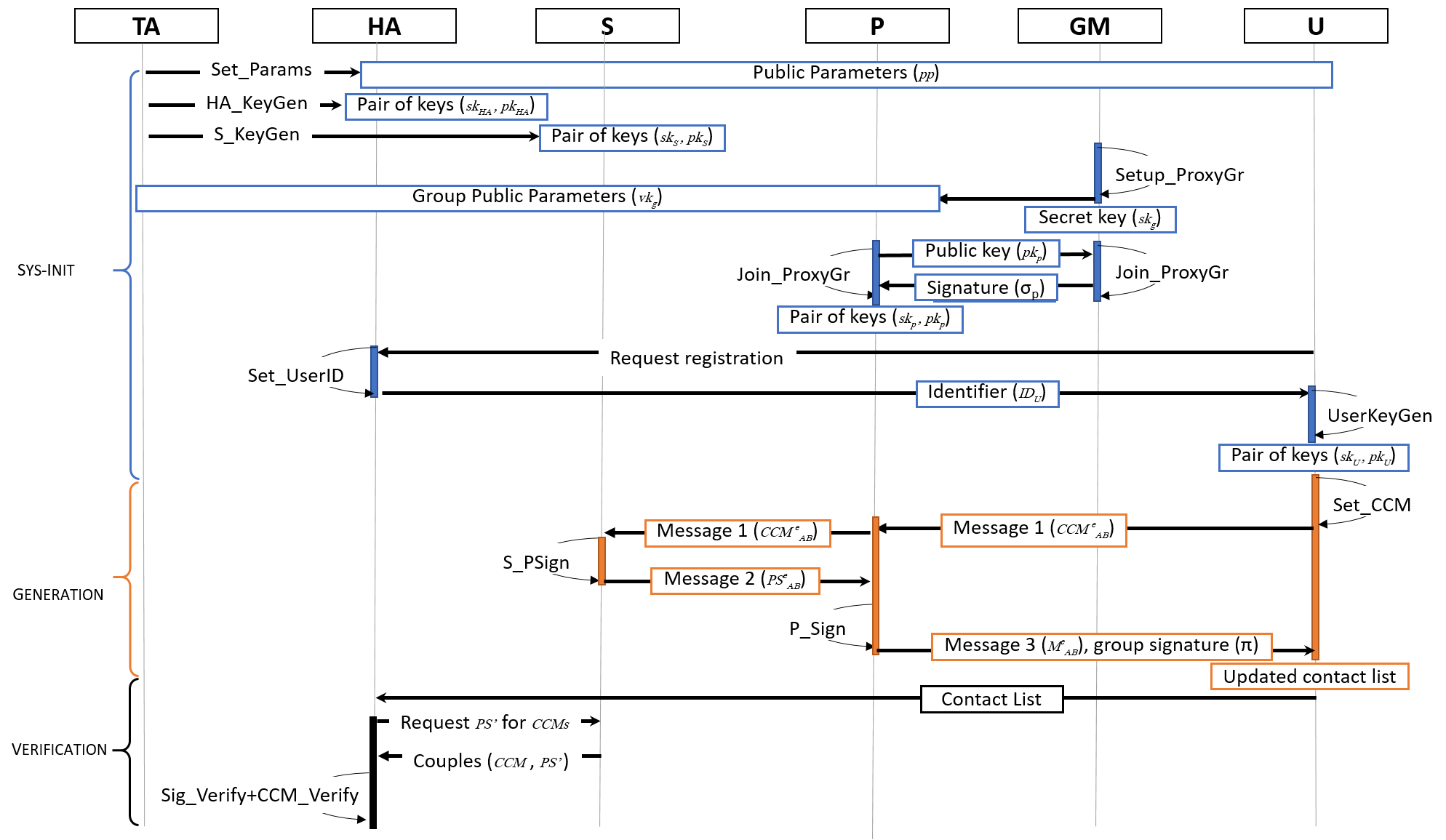}
\caption{Workflow of the SPOT protocol}
\label{fig:seqdiag}
\end{figure*}

\begin{itemize}
    \item \textsc{Sys\_Init} phase:\\
    $\mathsf{Set\_params}(\lambda) \rightarrow pp$ -- run by a trusted authority. Given the security parameter $\lambda$, this algorithm generates the system public parameters $pp$ that will be considered as a default input for all the following algorithms.
    
    $\mathsf{K\_keygen}() \rightarrow (\mathtt{sk}_{j}, \mathtt{pk}_{j})$ -- performed by a trusted authority. It returns the pair of keys $(\mathtt{sk}_{j}, \mathtt{pk}_{j})$ of $j$ where $j = \{\mathcal{HA}, \mathcal{S}\}$.

    $\mathsf{Setup\_ProxyGr}_{\mathcal{GM}}() \rightarrow (\mathtt{sk}_{g}, \mathtt{vk}_{g})$ -- this algorithm is performed by the group manager to set up the group signature. It returns the proxies' group verification key $\mathtt{vk}_g$ represented as $(\mathtt{pk}_g, \Sigma_{\mathsf{NIWI}})$, where $\mathtt{pk}_g$ is the public key of the group manager and $\Sigma_{\mathsf{NIWI}}$ is the Common Reference String CRS of a NIWI proof \cite{grothsahai}. The $\mathsf{Setup\_ProxyGr}_{\mathcal{GM}}$ algorithm also outputs the secret key $\mathtt{sk}_{g}$ that is privately stored by $\mathcal{GM}$.   
    
    $\mathsf{Join\_ProxyGr_{\mathcal{P}/\mathcal{GM}}}(\mathtt{sk}_{g}) \rightarrow (\mathtt{sk}_{p}, \mathtt{pk}_{p}, \sigma_p)$ -- this algorithm is performed through an interactive session between the proxy and the group manager. It takes as input the secret key $\mathtt{sk}_{g}$, and outputs the pair of keys $(\mathtt{sk}_{p}, \mathtt{pk}_{p})$ of $\mathcal{P}$ belonging to the group (i.e., $\mathcal{P}$ is responsible for generating his pair of keys), and a signature $\sigma_p$ over $\mathcal{P}$'s public key $\mathtt{pk}_p$ (i.e., $\sigma_p$ is generated by $\mathcal{GM}$).
    
    $\mathsf{Set\_UserID_{\mathcal{HA}}}() \rightarrow (t_\mathcal{U}, \mathtt{ID}_\mathcal{U})$ -- this algorithm is run by $\mathcal{HA}$ and returns a secret value $t_\mathcal{U}$ specific for $\mathcal{U}$ and the identifier $\mathtt{ID}_\mathcal{U}$ of $\mathcal{U}$.

    $\mathsf{Userkeygen_{U}}(\mathtt{ID}_\mathcal{U}) \rightarrow (\mathtt{sk}_{\mathcal{U}}, \mathtt{pk}_{\mathcal{U}})$ -- performed by $\mathcal{U}$ to set his pair of keys $(\mathtt{sk}_{\mathcal{U}}, \mathtt{pk}_{\mathcal{U}})$ relying on the identifier $\mathtt{ID}_\mathcal{U}$.
    
    \item \textsc{Generation} phase:\\
    $\mathsf{Set\_CCM_{\mathcal{U}}}(\mathtt{D}^e_A, \mathtt{D}^e_B) \rightarrow \mathtt{CCM}^e_{AB}$ -- run by each of two users $\mathcal{U}_A$ and $\mathcal{U}_B$ being in contact during an epoch $e$. Given two Bluetooth identifiers $\mathtt{D}^e_A$ and $\mathtt{D}^e_B$,
    this algorithm generates a common contact message $\mathtt{CCM}^e_{AB}$.

   $\mathsf{S\_PSign_{\mathcal{S}}}(\mathtt{CCM}^e_{AB}, \mathtt{sk}_{\mathcal{S}}) \rightarrow (\mathtt{PS}^e_{AB}, \mathtt{PS'}^e_{AB})$ -- run by $\mathcal{S}$. Given a common contact message $\mathtt{CCM}^e_{AB}$ sent by $\mathcal{U}_A$ and $\mathcal{U}_B$ through two different proxies $\mathcal{P}_1$ and $\mathcal{P}_2$, this algorithm outputs the couple ($\mathtt{PS}^e_{AB}$, $\mathtt{PS'}^e_{AB}$) that is stored with $\mathtt{CCM}^e_{AB}$ at $\mathcal{S}$, for $\Delta$ days. Note that only $\mathtt{PS}^e_{AB}$ is given back to $\mathcal{P}_1$ and $\mathcal{P}_2$ to prove that $\mathtt{CCM}^e_{AB}$ has been successfully received and verified by $\mathcal{S}$ (i.e., a real contact took place), while $\mathtt{PS'}^e_{AB}$ is kept secret at $\mathcal{S}$ and is sent only to $\mathcal{HA}$ to check the correctness of a contact message provided by a infected user.
    
    $\mathsf{P\_Sign}_{\mathcal{P}}(\mathtt{vk}_g, \mathtt{sk}_p, \mathtt{pk}_p, \sigma_p, \mathtt{ID}_{\mathcal{U}_A}, \mathtt{PS}^e_{AB}) \rightarrow (\mathtt{M}^e_{AB}$, $\sigma_m$, $\pi)$\footnote{In this algorithm, we only consider user $\mathcal{U}_A$ with $\mathtt{ID}_{\mathcal{U}_A}$. The same operations are performed for user $\mathcal{U}_B$ with $\mathtt{ID}_{\mathcal{U}_B}$.} -- performed by the proxy $\mathcal{P}$ ($\mathcal{P}_1$ or $\mathcal{P}_2$). This algorithm takes as input the proxies' group public parameters $\mathtt{vk}_g$, the pair of keys $(\mathtt{sk}_p, \mathtt{pk}_p)$ of $\mathcal{P}$, the signature $\sigma_p$ over $\mathcal{P}$'s public key, the identifier $\mathtt{ID}_{\mathcal{U}_A}$ of user $\mathcal{U}_A$ and the message $\mathtt{PS}^e_{AB}$. 
    It returns a signature $\sigma_m$ over a new message $\mathtt{M}^e_{AB}$ and a group signature represented by a NIWI proof $\pi$ over the two signatures $\sigma_p$ and $\sigma_m$. The couple ($\mathtt{M}^e_{AB}$, $\pi$) is sent to user $\mathcal{U}_A$ to be stored with the contact message $\mathtt{CCM}^e_{AB}$ in his contact list. Note that each input of the contact list is stored for $\Delta$ days.

    \item \textsc{Verification} phase:\\
    $\mathsf{Sig\_Verify_{\mathcal{HA}}} (\mathtt{vk}_g, \mathtt{M}^e_{AB}, \pi) \rightarrow b$ -- performed by $\mathcal{HA}$. Given the public parameters $\mathtt{vk}_g$, a message $\mathtt{M}^e_{AB}$ from the contact list of an infected user, and the corresponding NIWI proof $\pi$, the $\mathsf{Sig\_Verify_{\mathcal{HA}}}$ algorithm returns a bit $b \in \{0,1\}$ stating whether the proof is valid. 
    $\mathsf{CCM\_Verify_{\mathcal{HA}}}(\mathtt{M}^e_{AB}, \mathtt{PS'}^e_{AB}, \mathtt{pk}_{\mathcal{S}}, t_{\mathcal{U}_A}) \rightarrow b$ -- run by $\mathcal{HA}$. This algorithm takes as input the message $\mathtt{M}^e_{AB}$, the message $\mathtt{PS'}^e_{AB}$ requested from $\mathcal{S}$, the server's public key $\mathtt{pk}_{\mathcal{S}}$ and the secret value $t_{\mathcal{U}_A}$, and outputs a bit $b \in \{0,1\}$, i.e., $\mathtt{CCM}^e_{AB}$ is correctly generated or not.
\end{itemize}

\subsection{Threat Model}
\label{sec:thrmodel}

In this section, we first present the adversaries considered in \textsf{SPOT}, and then, the formal definitions of the different security and privacy properties. 

\begin{itemize}
    \item {\textbf{A malicious user ($\mathcal{U}$)}}: this adversary attempts to inject false contact messages or contact messages of other users in his contact list. He may also collude with corrupted proxies or malicious users.
\item{\textbf{A honest but curious health authority ($\mathcal{HA}$)}}: given a valid group signature, $\mathcal{HA}$ tries to identify the signer (i.e., proxy) of a particular message, hence for identifying the appropriate geographical area and for tracking the user's movements. She may also attempt to link two signatures issued by the same group member. A curious $\mathcal{HA}$ may also try to identify, from a contact list of a particular user, the list of users he had been in contact with.   
\item{\textbf{A honest but curious server ($\mathcal{S}$)}}: he attempts to link several common contact messages generated by the same user, to trace users' movements.
 \item \textbf{A malicious proxy ($\mathcal{P}$)}: this adversary, either colluding with a malicious user or with $n$ other proxies, attempts to forge the partial signature of the server and to generate a valid signature over a false contact message. 
\end{itemize}
Both malicious users and malicious proxies are considered against security properties, i.e., unforgeability, anti-replay, while the curious health authority and server are considered against privacy requirements, i.e., unlinkability and anonymity. 
The different adversaries are involved in different phases.\\ 
Note that the \textbf{anti-replay} property which aims at mitigating the multi-submission of the same contact information is not formally presented below, but is informally discussed in Section \ref{sec:secanal}.
The following properties are defined w.r.t the corresponding phases and the involved adversaries.

\begin{remark}
We do not deeply analyze the case of a malicious $\mathcal{GM}$ although our scheme is resistant against this adversary.
Indeed, proxies are responsible for generating their key pair and only their public keys are shared with $\mathcal{GM}$. 
Thus, unless holding a proxy's secret key, $\mathcal{GM}$ is not able to generate a valid signature on behalf of $\mathcal{P}$ thanks to the unforgeability of the signature scheme. 
\end{remark}

\subsubsection{Unforgeability}
\label{sec:unforg}

The unforgeability property ensures the security of \textsf{SPOT} for the different phases. It states that a malicious user is not able to forge his contact list (i.e., forging either the group signature or the server's partial signature when colluding with a malicious proxy)\footnote{We assume that (i) malicious user refers to either a single user or colluding users, and (ii) the group signature scheme used in \textsf{SPOT} is unforgeable as proven in \cite{XCGSIG}, thus in the security model and analysis, we will only consider the unforgeability of the server's partial signature.}. Formally, this is defined in a game $\textbf{Exp}_{\mathcal{A}}^{unforg}$ where an adversary $\mathcal{A}$, playing the role of a corrupted proxy colluding with a malicious user, has access to a $\mathsf{S\_PSign}$ oracle. We note that, for each session $i$, $\mathcal{A}$ only gets $\mathtt{PS}^i$ from the $\mathsf{S\_PSign}$ oracle, while $\mathtt{PS'}^i$ is kept secret from the adversary. Then, given a valid message $\mathtt{PS'}$ that cannot be obtained by combining either a part of or all messages $\mathtt{PS}^i$, $\mathcal{A}$ succeeds if it outputs a valid message $\mathtt{PS}^*$ to be signed using $\mathsf{P\_Sign}$, such that the $\mathsf{CCM\_Verify}$ verification holds.

\begin{definition}{\textbf{Unforgeability}} -- We say that \textsf{SPOT} satisfies the unforgeability property, if for every \emph{PPT} adversary $\mathcal{A}$, there exists a negligible function $\kappa$ such that: $Pr[\textbf{Exp}_{\mathcal{A}}^{unforg}(1^{\lambda})=1] \leq \kappa(\lambda)$, where $\textbf{Exp}_{\mathcal{A}}^{unforg}$ is given below. 

\label{def:unforg}
\end{definition}

\noindent
\fbox{
\begin{minipage}[r][\dimexpr 0.330\textheight-2\fboxsep-2\fboxrule\relax][t]{\dimexpr 0.435 \textwidth-2\fboxsep-2\fboxrule\relax}
    $\textbf{\textit{Exp}}_{\mathcal{A}}^{unforg}(\lambda)$ \\
    $pp \leftarrow \mathsf{Set\_params}(\lambda) $ \\
    $(\mathtt{sk_{\mathcal{HA}}}, \mathtt{pk_{\mathcal{HA}}}) \leftarrow \mathsf{HA\_keygen}(pp)$ \\
    $(\mathtt{sk_{\mathcal{S}}}, \mathtt{pk_{\mathcal{S}}}) \leftarrow \mathsf{S\_keygen}(pp) $ \\
    $(\mathtt{sk}_{g}, \mathtt{vk}_{g}) \leftarrow \mathsf{Setup\_ProxyGr}(pp)$ \\
    $(\mathtt{sk}_{p}, \mathtt{pk}_{p}, \sigma_p) \leftarrow \mathsf{Join\_ProxyGr}(pp, \mathtt{sk}_{g})$ \\
    $\mathtt{ID_{\mathcal{U}}} \leftarrow \mathsf{Set\_UserID}(pp)$ \\
    $(\mathtt{sk}_{\mathcal{U}}, \mathtt{pk}_{\mathcal{U}}) \leftarrow \mathsf{Userkeygen}(pp, \mathtt{ID_{\mathcal{U}}})$ \\
    $\mathtt{CCM} \leftarrow \mathsf{Set\_CCM}(\mathtt{D}_1, \mathtt{D}_2) $ \\
    $(\mathtt{PS}, \mathtt{PS'})$ $\leftarrow$ $\{\mathsf{S\_PSign}(\mathtt{CCM}, \mathtt{sk_{\mathcal{S}}})\}$ \\
    $\mathcal{O}$ $\leftarrow$ $\{\mathsf{S\_PSign}(\cdot, \mathtt{sk_{\mathcal{S}}})\}$ \\
    $\mathtt{PS}^* \leftarrow \mathcal{A}^{\mathcal{O}} (\mathtt{vk_{g}}, \mathtt{sk_{p}}, \mathtt{pk_{p}}, \sigma_p, ID_{\mathcal{U}}, pp, \mathtt{PS'}) $ 
    
    $\quad$ \textit{letting}  $\mathtt{CCM}$  \textit{and} $\mathtt{PS}^i$ \textit{denote the queries} 
    
    $\quad$ \textit{and answers to and from oracle} $\mathsf{S\_PSign}$ \\
    $(\mathtt{M}^*, \sigma^*, \pi^*) \leftarrow  \mathsf{P\_Sign}(\mathtt{vk}_g, \mathtt{sk}_p, \mathtt{pk}_p, \sigma_p, ID_{\mathcal{U}}, \mathtt{PS}^*)$ \\
    \textbf{If} $\mathsf{CCM\_Verify}( \mathtt{M}^*, \mathtt{PS'}^*, \mathtt{pk_{\mathcal{S}}}, t_{\mathcal{U}}) = 1 $
    
    
    
    $\quad$ \textbf{return} 1\\
    \textbf{Else return} 0
\end{minipage}
}

\subsubsection{Unlinkability}

The unlinkability property can be divided into two sub-properties. The first one constitutes the \emph{group-signature unlinkability} stating that a curious health authority is not able to link two or several group signatures issued by the same proxy during the \textsc{Verification} phase. The second sub-property \emph{multi-CCM unlinkability} ensures that a curious server is not able to link two or several common contact messages to the same user during the \textsc{Generation} phase
\footnote{The collusion between the health authority and the server does not pose additional and plausible threats to the different procedures of the whole framework. Indeed, during the \textsc{Generation} phase, contact messages are anonymous to the server (and a possible colluding health authority); during the \textsc{Verification} phase, the health authority knows the true identity of the confirmed cases with their contact information; as  such, a collusion between the server and the authority does not bring extra knowledge.}.

We note that the \emph{multi-CCM unlinkability} property will be informally
discussed in Section \ref{sec:secanal}. In this section, we only focus on the \emph{group-signature unlinkability}. Formally, this property is defined in a game $\textbf{Exp}_{\mathcal{A}}^{unlink}$ where an adversary $\mathcal{A}$ acting as a curious $\mathcal{HA}$ has access to a $\mathsf{P\_Sign}$ oracle. The adversary may query this oracle on the same message $\mathtt{PS}^*$ and on a tuple $((\mathtt{sk}_{p_j}, \mathtt{pk}_{p_j}, \sigma_{p_j})$, where $j \in \{0,1\}$ (i.e., the tuple belongs either to proxy $\mathcal{P}_0$ or proxy $\mathcal{P}_1$).
A left-or-right oracle $\mathsf{LoRSig}$ is initialized with a secret random bit $b$ and returns to $\mathcal{A}$ $\mathsf{P\_Sign}$ on message $\mathtt{PS}^*$ and respectively on tuples $(\mathtt{sk}_{p_0}, \mathtt{pk}_{p_0}, \sigma_{p_0})$ and $(\mathtt{sk}_{p_b}, \mathtt{pk}_{p_b}, \sigma_{p_b})$.
The adversary wins the game if he successfully predicts the bit $b$ (i.e., the guessing probability should be greater than $\frac{1}{2}$).

\begin{definition}{\textbf{Unlinkability}} -- We say that \textsf{SPOT} satisfies the unlinkability property, if for every \emph{PPT} adversary $\mathcal{A}$, there exists a negligible function $\kappa$ such that: $Pr[\textbf{Exp}_{\mathcal{A}}^{unlink}(\lambda)=1] = \frac{1}{2} \pm \kappa(\lambda)$, where $\textbf{Exp}_{\mathcal{A}}^{unlink}$ is defined below.
\label{def:unlink}
\end{definition}

\noindent
\fbox{%
\begin{minipage}[r][\dimexpr 0.315\textheight-2\fboxsep-2\fboxrule\relax][t]{\dimexpr 0.435\textwidth-2\fboxsep-2\fboxrule\relax}%
   $\textbf{\textit{Exp}}_{\mathcal{A}}^{unlink}(\lambda)$ \\
    $pp \leftarrow \mathsf{Set\_params}(\lambda) $ \\
    $(\mathtt{sk_{\mathcal{HA}}}, \mathtt{pk_{\mathcal{HA}}}) \leftarrow \mathsf{HA\_keygen}(pp)$ \\
    $(\mathtt{sk_{\mathcal{S}}}, \mathtt{pk_{\mathcal{S}}}) \leftarrow \mathsf{S\_keygen}(pp) $ \\
    $(\mathtt{sk}_{g}, \mathtt{vk}_{g}) \leftarrow \mathsf{Setup\_ProxyGr}(pp)$ \\
    $(\mathtt{sk}_{p_i}$, $\mathtt{pk}_{p_i}$, $\sigma_{p_i})$ $\leftarrow$ $\mathsf{Join\_ProxyGr}$ ($pp$, $\mathtt{sk}_{g}$), $i \in \{0,1\}$ \\
    $\mathtt{ID_{\mathcal{U}}} \leftarrow \mathsf{Set\_UserID}(pp)$ \\
    $(\mathtt{sk}_{\mathcal{U}}, \mathtt{pk}_{\mathcal{U}}) \leftarrow \mathsf{Userkeygen}(pp, \mathtt{ID_{\mathcal{U}}})$ \\
    $m^* \leftarrow \mathsf{Set\_CCM}(\mathtt{D}_{1}, \mathtt{D}_{2})$ \\
    $(\mathtt{PS}^*, \mathtt{PS'}^*)  \leftarrow \mathsf{S\_PSign}(m^*, \mathtt{sk_{S}})$ \\
    $b \leftarrow \{0,1\}$ \\
    $\mathcal{O}$ $\leftarrow$ $\{\mathsf{P\_Sign}$($\cdot$,$\mathtt{sk}_{p_j}$,$\mathtt{pk}_{p_j}$,$\sigma_{p_j}$,$\cdot$,$\cdot$), $\mathsf{LoRSig}$($\cdot$,$\cdot$,$b)\}$ \\
    $b'$ $\leftarrow$ $\mathcal{A}^{\mathcal{O}}$ $(\mathtt{sk_{\mathcal{HA}}}$,$\mathtt{pk_{\mathcal{HA}}}$, $\mathtt{vk}_{g}$,$\mathtt{pk}_{p_0}$,$\mathtt{pk}_{p_1}$,$pp$,$\mathtt{PS}^*) $ \\
    \textbf{If} $b = b'$ 
    
    $\quad$ \textbf{return} 1\\
    \textbf{Else return} 0
\end{minipage}%
}%
\medskip

\fbox{%
\begin{minipage}[r][\dimexpr 0.150\textheight-2\fboxsep-2\fboxrule\relax][t]{\dimexpr 0.435\textwidth-2\fboxsep-2\fboxrule\relax}%
    $\mathsf{LoRSig}$ $(\mathtt{vk}_{g}$, (($\mathtt{sk}_{p_0}$, $\mathtt{pk}_{p_0}$, $\sigma_{p_0})$, ($\mathtt{sk}_{p_1}$, $\mathtt{pk}_{p_1}$, $\sigma_{p_1}$)), $\mathtt{ID_{\mathcal{U}}}$, $\mathtt{PS}^*, b)$\\
    
     $({\mathtt{M}}^*, \sigma^*, \pi^*)$ $\leftarrow$ $\mathsf{P\_Sign}$($\mathtt{vk}_g$, $\mathtt{sk}_{p_0}$, $\mathtt{pk}_{p_0}$, $\sigma_{p_0}$, $\mathtt{ID_{\mathcal{U}}}$ , $\mathtt{PS}^*) $\\
     $({\mathtt{M}}^*, \sigma^*_b, \pi^*_b)$ $\leftarrow$ $\mathsf{P\_Sign}$($\mathtt{vk}_g$, $\mathtt{sk}_{p_b}$, $\mathtt{pk}_{p_b}$, $\sigma_{p_b}$, $\mathtt{ID_{\mathcal{U}}}$ , $\mathtt{PS}^*)  $\\
     \textbf{return} $(({\mathtt{M}}^*, \pi^*), ({\mathtt{M}}^*, \pi^*_b))$
\end{minipage}%
}%

\subsubsection{Anonymity}

This property guarantees that no entity is able to identify users involved in a contact list (i.e., the owner and the contacted users), during the \textsc{Verification} phase, and is described through the game $\textbf{Exp}_{\mathcal{A}}^{anon}$. The anonymity property implies that even if $\mathcal{HA}$ knows that a contact list belongs to a user ($\mathcal{U}$), $\mathcal{HA}$ is not able to identify users being in contact with $\mathcal{U}$
\footnote{We assume that the probability of two confirmed users being in contact and submitting their respective contact lists to $\mathcal{HA}$ at the same period, is low.}. This should hold even if an efficient adversary, playing the role of the curious health authority, is given access to $\mathsf{Set\_CCM}$, $\mathsf{S\_PSign}$, $\mathsf{P\_Sign}$ oracles. $\mathcal{A}$ can learn contact messages and signatures associated to the selected users' identifiers. $\mathcal{A}$ also gets access to a left-or-right oracle $\mathsf{LoRCU}$ which is initialized with a secret random bit $b \in \{0,1\}$. $\mathcal{A}$ may query this oracle on $\mathtt{ID_{\mathcal{U}_0}}$ and $\mathtt{ID_{\mathcal{U}_1}}$ referred to as the identifiers of respectively user $\mathcal{U}_0$ and user $\mathcal{U}_1$. Observe that user $\mathcal{U}_A$ is involved in all queries.
$\mathtt{D}^*_{\mathcal{U}_A}$ and $\mathtt{D}^*_{\mathcal{U}_b}$, respectively belonging to user $\mathcal{U}_A$ and user $\mathcal{U}_b$, are randomly selected in order to execute the $\mathsf{LoRCU}$ oracle.
To win the proposed anonymity game, the adversary should predict the bit $b$ (i.e., which one of users $\mathcal{U}_0$ and $\mathcal{U}_1$ is involved in the contact with user $\mathcal{U}_A$) with a probability greater than $\frac{1}{2}$.

\begin{definition}{\textbf{Anonymity}} -- We say that \textsf{SPOT} fulfills the anonymity requirement, if for every \emph{PPT} adversary $\mathcal{A}$, there exists a negligible function $\kappa$ such that: $Pr[\textbf{Exp}_{\mathcal{A}}^{anon}(1^{\lambda})=1] = \frac{1}{2} \pm \kappa(\lambda)$, where $\textbf{Exp}_{\mathcal{A}}^{anon}$ is defined as follows.
\label{def:anon}
\end{definition}

\noindent
\fbox{%
\begin{minipage}[l][\dimexpr 0.310\textheight-2\fboxsep-2\fboxrule\relax][t]{\dimexpr 0.435\textwidth-2\fboxsep-2\fboxrule\relax}%
   $\textbf{\textit{Exp}}_{\mathcal{A}}^{anon}(\lambda)$ \\
    $pp \leftarrow \mathsf{Set\_params}(\lambda) $ \\
    $(\mathtt{sk_{\mathcal{HA}}}, \mathtt{pk_{\mathcal{HA}}})$ $\leftarrow$ $\mathsf{HA\_keygen}(pp)$ \\
    $(\mathtt{sk_{\mathcal{S}}}, \mathtt{pk_{\mathcal{S}}})$ $\leftarrow$ $\mathsf{S\_keygen}(pp) $ \\
    $(\mathtt{sk}_{g}, \mathtt{vk}_{g})$ $\leftarrow$ $\mathsf{Setup\_ProxyGr}(pp)$ \\
    $(\mathtt{sk}_{p}, \mathtt{pk}_{p}, \sigma_p)$ $\leftarrow$ $\mathsf{Join\_ProxyGr}(pp, \mathtt{sk}_{g})$ \\
    $\mathtt{ID}_{\mathcal{U}_A}$ $\leftarrow$ $\mathsf{Set\_UserID}(pp)$  \\
    $(\mathtt{sk}_{\mathcal{U}_A}, \mathtt{pk}_{\mathcal{U}_A})$ $\leftarrow$ $\mathsf{Userkeygen}(pp, \mathtt{ID}_{\mathcal{U}_A})$  \\
    $\mathtt{ID}_{\mathcal{U}_i}$ $\leftarrow$ $\mathsf{Set\_UserID}(pp)$, $i \in \{0..N\}$ \\
    $(\mathtt{sk}_{\mathcal{U}_i}, \mathtt{pk}_{\mathcal{U}_i})$ $\leftarrow$ $\mathsf{Userkeygen}(pp, \mathtt{ID}_{\mathcal{U}_i})$, $i \in \{0..N\}$ \\
    $b \leftarrow \{0,1\}$ \\
    $\mathcal{O}$ $\leftarrow$ $\{\mathsf{Set\_CCM}$($\cdot$,$\cdot$), $\mathsf{S\_PSign}$($\cdot$,$\mathtt{sk_{\mathcal{S}}}$),  $\mathsf{P\_Sign}$($\cdot$,$\mathtt{sk_{p}}$, $\cdot$,$\sigma_p$,$\cdot$,$\cdot$), $\mathsf{LoRCU}$($\cdot$,$\cdot$,$b$,$\cdot)$ \\
    $b'$ $\leftarrow$ $\mathcal{A}^{\mathcal{O}}$ $(\mathtt{sk_{\mathcal{HA}}}$, $\mathtt{pk_{\mathcal{HA}}}$, $pp$, $\mathtt{ID}_{\mathcal{U}_A}$, $\{\mathtt{ID}_{\mathcal{U}_i}\}_{i=0}^N) $ \\
    \textbf{If} $b = b'$ 
    
    $\quad$ \textbf{return} 1\\
    \textbf{Else return} 0
\end{minipage}%
}%

\medskip

\fbox{%
\begin{minipage}[r][\dimexpr 0.150\textheight-2\fboxsep-2\fboxrule\relax][t]{\dimexpr 0.435\textwidth-2\fboxsep-2\fboxrule\relax}%
   $\mathsf{LoRCU}$ $(\mathtt{D}^*_{\mathcal{U}_A}$, $\mathtt{D}^*_{\mathcal{U}_b}$, $b$, $\mathtt{vk}_{g}$, $\mathtt{sk_{\mathcal{S}}}$, $\mathtt{sk}_{p}$, $\mathtt{pk}_{p}$, $\sigma_p$, $\mathtt{ID}_{\mathcal{U}_A}$, $\mathtt{ID}_{\mathcal{U}_b})$\\
       
     $ {\mathsf{CCM}}^*_{b}$ $\leftarrow$ $\mathsf{Set\_CCM_{\mathcal{U}_A}}$ $(\mathtt{D}^*_{\mathcal{U}_A}$, $\mathtt{D}^*_{\mathcal{U}_b})$ \\
     $ ({\mathsf{PS}}^*_b, {\mathsf{PS}'}^*_b)$ $\leftarrow$ $\mathsf{S\_PSign}$ $({\mathsf{CCM}}^*_{b}$, $\mathtt{sk_{\mathcal{S}}})$ \\
     $(\mathtt{M}^*_b, \sigma^*_b, \pi^*_b)$ $\leftarrow$ $\mathsf{P\_Sign}$ $(\mathtt{vk}_g$, $\mathtt{sk}_p$, $\mathtt{pk}_p$, $\sigma_p$, $\mathtt{ID}_{\mathcal{U}_A}$, ${\mathsf{PS}}^*_b)$ \\
     \textbf{return} $({\mathsf{CCM}}^*_{b}, \mathtt{M}^*_b, \pi^*_b)$
\end{minipage}%
}%

\section{Building Blocks}
\label{sec:blocks}

After introducing bilinear maps and security standard assumptions in Section \ref{subsec:math}, the section next presents structure-preserving signatures \cite{XCGSIG} with their different variants as main building blocks of the \textsf{SPOT} protocol. 
Sections \ref{CSIG} and \ref{XSIG} describe respectively constant-size signatures and signatures on mixed-group messages that are instantiated in Appendix \ref{sec:gs} to build a group signature scheme on group element messages. 

\subsection{Mathematical Background and Cryptographic Assumptions}
\label{subsec:math}

Hereafter, we define bilinear maps and we present the computational indistinguishability property and the CDH assumption.

\subsubsection{Bilinear Maps}

Let $\mathbb{G}_1$=$\langle{g_1}\rangle$ and $\mathbb{G}_2$=$\langle{g_2}\rangle$ be two cyclic groups of order $n$ so there exists a bilinear map $e:\mathbb{G}_1 \times \mathbb{G}_2 \rightarrow \mathbb{G}_3$ that satisfies the following properties: (i) bilinearity for all $g_1 \in \mathbb{G}_1$, $g_2 \in \mathbb{G}_2$, (ii) non-degeneracy: $e(g_1,g_2) \neq 1$ and (iii) $e(g_1,g_2)$ is efficiently computable for any $g_1 \in \mathbb{G}_1$ and $g_2 \in \mathbb{G}_2$.

\subsubsection{Computational Witness-Indistinguishability}

The Computational Witness-Indistinguishability property is defined as follows: Let $L \in \mathcal{NP}$ be a language and let $(\mathcal{P}, \mathcal{V})$ be an interactive proof system for $L$. We say that $(\mathcal{P}, \mathcal{V})$ is \emph{witness-indistinguishable} (WI) if for every \emph{PPT} algorithm $\mathcal{V}^*$ and every two sequences $\{w^1_x\}_{x \in L}$ and $\{w^2_x\}_{x \in L}$ such that $w^1_x$ and $w^2_x$ are both witnesses for $x$, the following ensembles are computationally indistinguishable, where $z$ is an auxiliary input to $\mathcal{V}^*$:
\begin{enumerate}
    \item $\{\langle \mathcal{P}(w^1_x), \mathcal{V}^*\rangle (x)\}_{x \in L, z \in \{0,1\}^*}$
    \item $\{\langle \mathcal{P}(w^2_x), \mathcal{V}^*\rangle (x)\}_{x \in L, z \in \{0,1\}^*}$
\end{enumerate}

\subsubsection{Computational Diffie Hellman Assumption (CDH)} 
The CDH assumption is defined as follows: Let $\mathbb{G}$ be a group of prime order $n$, and $g$ is a generator of $\mathbb{G}$. The CDH problem is defined as: Given the tuple of elements $ (g,g^x,g^y)$, where $\{x,y\} \leftarrow \mathbb{Z}_n$, there is no efficient algorithm $\mathcal{A}_{CDH}$ that can compute $g^{xy}$.

\subsection{Structure-preserving Constant-size Signature}
\label{CSIG}

Structure-preserving constant-size signature was defined by Abe \emph{et al.} \cite{XCGSIG} as the main scheme of structure-preserving signatures used to sign a message $\Vec{m} = (m_1,...,m_k) \in {\mathbb{G}_2}^k$, considering an asymmetric bilinear group $(n,\mathbb{G}_1,\mathbb{G}_2,$ $\mathbb{G}_3,g_1, g_2, e)$.
A constant-size signature scheme $\mathsf{CSIG}$ \cite{XCGSIG} relies on the following three PPT algorithms ($\mathsf{CSIG}.\mathsf{Key}$, $\mathsf{CSIG}.\mathsf{Sign}$, $\mathsf{CSIG}.\mathsf{Verify}$): \\

${\mathsf{CSIG}}.\mathsf{Key}(1^{\lambda})$: This algorithm takes as input the security parameter $(1^{\lambda})$ and outputs the pair of public and secret keys $(\mathtt{sk}, \mathtt{pk})$ of the signer. It chooses two random generators $g_r, h_u \leftarrow \mathbb{G}^*_1 $ and random values $\gamma_i, \delta_i \leftarrow \mathbb{Z}^*_n$ and computes $g_{i} = {g_r}^{\gamma_i}$ and $h_{i} = {h_u}^{\delta_i}$, for $i= 1,...,k$. It then selects $\gamma_z, \delta_z \leftarrow \mathbb{Z}^*_n$ and computes $g_{z} = {g_r}^{\gamma_z}$ and $h_{z} = {h_u}^{\delta_z}$. It also chooses $\alpha, \beta \leftarrow \mathbb{Z}^*_n$ and sets the couples $(g_r, g_2^{\alpha})$ and $(h_u, g_2^{\beta})$. The public key is set as $\mathtt{pk}= (g_z, h_z, g_r, h_u, g_2^{\alpha}, g_2^{\beta}, \{g_{i}, h_{i}\}^k_{i=1})$ and the secret key is set as $\mathtt{sk} = (\mathtt{pk}, \alpha, \beta, \gamma_z, \delta_z, \{\gamma_i, \delta_i\}^k_{i=1})$.  \\

$\mathsf{CSIG}.\mathsf{Sign}(\mathtt{sk}, \Vec{m})$: This algorithm generates a signature $\sigma$ over a message $\Vec{m}$ using the secret key $\mathtt{sk}$. That is, the signer randomly selects $\zeta, \rho, \tau, \varphi, \omega \leftarrow \mathbb{Z}^*_n$ and computes 
$$z = g_2^{\zeta}, r = {g_2}^{\alpha-\rho \tau - {\gamma_z} \zeta} {\prod}^k_{i=1} {m_i}^{-\gamma_i}, s= {g_r}^{\rho}, t = {g_2}^{\tau},$$
$$u = {g_2}^{\beta-\varphi \omega - {\delta_z} \zeta}{\prod}^k_{i=1} {m_i}^{-\delta_i}, v = {h_u}^{\varphi}, w = {g_2}^{\omega} $$
The signature is set as $\sigma = (z, r, s, t, u, v, w)$. \\

$\mathsf{CSIG}.\mathsf{Verify}(\mathtt{pk}, \Vec{m}, \sigma)$: This algorithm checks the validity of the signature $\sigma$ on the message $m$ relying on the signer's public key $\mathtt{pk}$. It outputs 1 if the signature is valid and 0 otherwise. The verifier checks if the following equations hold: 

\begin{equation}
    A = e(g_z, z)e(g_r, r)e(s,t){\prod}^k_{i=1}e(g_{i}, m_i)
    \label{sigverif1}
\end{equation}

\begin{equation}
    B = e(h_z, z)e(h_u, u)e(v,w){\prod}^k_{i=1}e(h_{i}, m_i)
    \label{sigverif2}
\end{equation}

where $A = e(g_r, g_2^{\alpha})$ and $B= e(h_u, g_2^{\beta})$

\subsection{Structure-preserving signature on mixed-group messages}
\label{XSIG}

A structure-preserving signature on mixed-group messages $\mathsf{XSIG}$ \cite{XCGSIG} represents a signature scheme where the message space is a mixture of the two groups $\mathbb{G}_1$ and $\mathbb{G}_2$. We consider two constant-size signature schemes $\mathsf{CSIG}1$ and $\mathsf{CSIG}2$. $\mathsf{CSIG}2$ is the same scheme as in Section \ref{CSIG} where the message space is ${\mathbb{G}_2}^{k_2}$, while $\mathsf{CSIG}1$ is a ’dual’ scheme obtained by exchanging $\mathbb{G}_1$ and $\mathbb{G}_2$ in the same scheme, where the message space is ${\mathbb{G}_1}^{k_1}$. The message space for the $\mathsf{XSIG}$ is then ${\mathbb{G}_1}^{k_1} \times {\mathbb{G}_2}^{k_2}$. Let $(\Vec{m}, \Vec{\tilde{m}})$ be a message in ${\mathbb{G}_1}^{k_1} \times {\mathbb{G}_2}^{k_2}$. For a vector $\Vec{\tilde{m}} \in {\mathbb{G}_1}^{k_1}$ and a single element $s \in \mathbb{G}_1$, let $\Vec{m}||s$ denote a vector in ${\mathbb{G}_1}^{k_1+1}$ obtained by appending $s$ to the end of $\Vec{m}$.

A mixed-group messages signature scheme $\mathsf{XSIG}$ relies on the following three PPT algorithms ($\mathsf{XSIG}.\mathsf{Key}$, $\mathsf{XSIG}.\mathsf{Sign}$, $\mathsf{XSIG}.\mathsf{Verify}$): \\ 

$\mathsf{XSIG}.\mathsf{Key}(1^{\lambda})$: This algorithm runs $(\mathtt{sk}_1$, $\mathtt{pk}_1)$ $\leftarrow$ $\mathsf{CSIG}1.\mathsf{Key}(1^{\lambda})$ and $(\mathtt{sk}_2$, $\mathtt{pk}_2)$ $\leftarrow$ $\mathsf{CSIG}2.\mathsf{Key}(1^{\lambda})$ and sets $(\mathtt{sk}, \mathtt{pk}) = ((\mathtt{sk}_1, \mathtt{sk}_2),(\mathtt{pk}_1, \mathtt{pk}_2))$. \\

$\mathsf{XSIG}.\mathsf{Sign}(\mathtt{sk}, (\Vec{m}, \Vec{\tilde{m}}))$: This algorithm runs $\sigma_2$ = $(z, r, s, t, u, v, w)$ $\leftarrow$ $\mathsf{CSIG}2.\mathsf{Si}$- $\mathsf{gn}(\mathtt{sk_2}$, $\Vec{\tilde{m}})$ and  $\sigma_1 = (z', r', s', t', u', v', w') \leftarrow \mathsf{CSIG}1.\mathsf{Sign}(\mathtt{sk_1}, \Vec{m}||s)$, and outputs $\sigma = (\sigma_1, \sigma_2)$. \\

$\mathsf{XSIG}.\mathsf{Verify}(\mathtt{pk}, (\Vec{m}, \Vec{\tilde{m}}), (\sigma_1, \sigma_2))$: This algorithm takes $s \in \mathbb{G}_1$ from $\sigma_2$, runs $b_2 = \mathsf{CSIG}2.\mathsf{Verify}(\mathtt{pk_2}, \Vec{\tilde{m}}, \sigma_2)$ and $b_1 = \mathsf{CSIG}1.\mathsf{Verify}(\mathtt{pk_1}, \Vec{m}||s, \sigma_1)$. If $b_1 = b_2 = 1$, the algorithm outputs 1, otherwise it outputs 0.

\subsection{Group signatures drawn from structure-preserving signatures}
\label{sec:gs}

We present hereafter an instantiation of a group signature scheme that allows to sign a group element message relying on a constant-size signature scheme $\mathsf{CSIG}$, a mixed-group messages signature scheme $\mathsf{XSIG}$ and a witness indistinguishable proof of knowledge system $\mathsf{NIWI}$ \cite{grothsahai} (cf. Appendix \ref{NIWI}).

A group signature scheme $\mathsf{GSIG}$ relies on the four following algorithms ($\mathsf{GSIG}.\mathsf{Setup}$, $\mathsf{GSIG}.\mathsf{Join}$, $\mathsf{GSIG}.\mathsf{Sign}$, $\mathsf{GSIG}.\mathsf{Verify}$): 

$\mathsf{GSIG}.\mathsf{Setup}$ : represents the setup algorithm. It runs $\mathsf{XSIG}.\mathsf{Key}$ algorithm that generates the key pair $(\mathtt{sk_g}, \mathtt{pk}_g)$ of the group manager and sets up a CRS $\Sigma_{\mathsf{NIWI}}$ for the $\mathsf{NIWI}$ proof. The group verification key is set as $\mathtt{vk}_g = (\mathtt{pk}_g, \Sigma_{\mathsf{NIWI}})$, while the certification secret key $\mathtt{sk_g}$ is privately stored by the group manager. 

$\mathsf{GSIG}.\mathsf{Join}$: represents the join algorithm. It is composed of two steps. In the first one, the group member generates his key-pair $(\mathtt{sk_p}, \mathtt{pk}_p)$ while running the $\mathsf{CSIG}.\mathsf{Key}$ algorithm. Only the public key $\mathtt{pk}_p$ is sent to the group manager. This latter generates a signature $\sigma_p$ over $\mathtt{pk}_p$, using the $\mathsf{XSIG}.\mathsf{Sign}$ algorithm, and sends it to the group member. 

$\mathsf{GSIG}.\mathsf{Sign}$: represents the signing algorithm run by a group member on a message $m \in \mathbb{G}_2$. The group member generates, over the message $m$, a signature $\sigma_m \leftarrow \mathsf{CSIG}.\mathsf{Sign}(\mathtt{sk}_p, m)$ and a non-interactive witness indistinguishable proof of knowledge $\pi \leftarrow \mathsf{NIWI}.\mathsf{Proof}(\Sigma_{\mathsf{NIWI}}, pub, wit)$ that proves $1 = \mathsf{XSIG}.\mathsf{Verify}(\mathtt{pk}_g, \mathtt{pk}_p, \sigma_p)$ and $1 = \mathsf{CSIG}.\mathsf{Verify}(\mathtt{pk}_p, m, \sigma_m)$ with respect to the witness $wit = (\mathtt{pk}_p, \sigma_p, \sigma_m)$ and the public information $pub = (\mathtt{pk}_g, m)$. The signing algorithm outputs the group signature $\pi$. 

$\mathsf{GSIG}.\mathsf{Verify}$: represents the group signature verification algorithm run by a verifier. It takes $(\mathtt{vk}_g, m, \pi)$ as input and verifies the correctness of the $\mathsf{NIWI}$ proof $\pi$ w.r.t. $pub = (\mathtt{pk}_g, m)$ and the CRS $\Sigma_{\mathsf{NIWI}}$.

\section{SPOT Algorithms}
\label{sec:constr}

This section gives a concrete construction of the different phases and algorithms of \textsf{SPOT}, introduced in Section~\ref{sec:sysmodel}. 
\textsf{SPOT} relies on the different variants of structure-preserving signatures represented in Appendix~\ref{sec:blocks}.

    \subsection{Sys\_Init phase}

    \begin{itemize}
        \item $\mathsf{Set\_params}$ -- a trusted authority sets an asymmetric bilinear group $(n$, $\mathbb{G}_1$, $\mathbb{G}_2$, $\mathbb{G}_3$, $g_1$, $g_2$, $e)$ relying on the security parameter $\lambda$, where $\mathbb{G}_1$ and $\mathbb{G}_2$ are two cyclic groups of prime order $n$, $g_1$ and $g_2$  are generators of respectively $\mathbb{G}_1$ and $\mathbb{G}_2$ and $e$ is a bilinear map such that $e:\mathbb{G}_1 \times \mathbb{G}_2 \rightarrow \mathbb{G}_3$. The trusted authority also considers a cryptographic hash function $\mathbf{H}: \{0,1\}^* \rightarrow \mathbb{Z}_n$. The output of the $\mathsf{Set\_params}$ algorithm represents the system global parameters that are known by all the system entities. The tuple $(n,\mathbb{G}_1,\mathbb{G}_2,\mathbb{G}_3,g_1, g_2, e, \mathbf{H})$ is denoted by $pp$, and is considered as a default input of all algorithms.
        
        \item $\mathsf{HA\_keygen}$ -- a trusted authority takes as input the public parameters $pp$, selects a random $x \in \mathbb{Z}^*_n$ and generates the pair of secret and public keys $(\mathtt{sk}_{\mathcal{HA}}, \mathtt{pk}_{\mathcal{HA}})$ of the health authority as follows:\\
        \centerline{
  $\mathtt{sk}_{\mathcal{HA}}=x \quad ; \quad \mathtt{pk}_{\mathcal{HA}} = g_2^x$}

         \item $\mathsf{S\_keygen}$ -- a trusted authority generates the pair of secret and public keys $(\mathtt{sk}_{\mathcal{S}}, \mathtt{pk}_{\mathcal{S}})$ of the server as given below, relying on the system public parameters $pp$ and two selected randoms $y_1, y_2 \in \mathbb{Z}^*_n$.\\
        \centerline{
  $\mathtt{sk}_{\mathcal{S}}=(y_1, y_2) \quad ; \quad \mathtt{pk}_{\mathcal{S}} = (Y_1, Y_2) = (g_2^{y_1}, g_2^{y_2})$ }

        \item $\mathsf{Setup\_ProxyGr_{\mathcal{GM}}}$ -- $\mathcal{GM}$ sets up the group of proxies by generating a group public key $\mathtt{vk}_{g}$ and a certification secret key $\mathtt{sk}_{g}$ as shown in Algorithm \ref{alg:setup}.

    \item $\mathsf{Join\_ProxyGr_{\mathcal{P}/\mathcal{GM}}}$ -- $\mathcal{P}$ first generates his pair of keys $(\mathtt{sk}_{p}, \mathtt{pk}_{p})$ w.r.t. the ${\mathsf{CSIG}}.\mathsf{Key}$ algorithm (cf. Section \ref{CSIG}). Afterwards, $\mathcal{GM}$ generates a signature $\sigma_p$ over the public key $\mathtt{pk}_{p}$ w.r.t. the ${\mathsf{XSIG}}.\mathsf{Sign}$ algorithm (cf. Section \ref{XSIG}). The $\mathsf{Join\_ProxyGr_{\mathcal{P}/\mathcal{GM}}}$ algorithm is detailed in Algorithm \ref{algo:join}. 

 \def\ALGO{\textsc{ProxyGr.Setup}}
 \begin{algorithm}[!ht]
 \caption{ $\mathsf{Setup\_ProxyGr_{\mathcal{GM}}}$ algorithm}
 \label{alg:setup}
 \begin{algorithmic}[1]
 \State {\bf Input:} the system public parameters $pp$
 \State {\bf Output:} the public parameters $\mathtt{vk}_{g}$ of the proxies' group and the secret key $\mathtt{sk}_{g}$ 

 \State // \textbf{The next iterations are executed to generate the pair of keys of $\mathcal{GM}$}
 \State pick at random $g_{r1}, h_{u1} \leftarrow \mathbb{G}^*_1$, $g_{r2}, h_{u2} \leftarrow \mathbb{G}^*_2$\;
 \For{$i= 1 $ to $2$}{
    pick at random $\gamma_{1i}, \delta_{1i} \leftarrow \mathbb{Z}^*_n$\;
    compute $g_{1i} \leftarrow {g_{r1}}^{\gamma_{1i}}$, $h_{1i} \leftarrow {h_{u1}}^{\delta_{1i}}$\;
 }
 \EndFor
  \For{$j= 1 $ to $7$}{
   pick at random $\gamma_{2j}, \delta_{2j} \leftarrow \mathbb{Z}^*_n$\;
    compute $g_{2i} \leftarrow {g_{r2}}^{\gamma_{2j}}$ and $h_{2j} \leftarrow {h_{u2}}^{\delta_{2j}}$\;
 }
 \EndFor
 \State pick at random $\gamma_{1z}, \delta_{1z}, \gamma_{2z}, \delta_{2z} \leftarrow \mathbb{Z}^*_n$ ;
 \State compute $g_{1z} \leftarrow {g_{r1}}^{\gamma_{1z}}$, $h_{1z} \leftarrow{h_{u1}}^{\delta_{1z}}$, $g_{2z} \leftarrow{g_{r2}}^{\gamma_{2z}}$ and $h_{2z} \leftarrow{h_{u2}}^{\delta_{2z}}$ ;
 \State pick at random $\alpha_1, \alpha_2, \beta_1, \beta_2 \leftarrow \mathbb{Z}^*_n$ ;
 \State $\mathtt{pk}_1 \leftarrow (g_{2z}, h_{2z}, g_{2r}, h_{2u}, g_1^{\alpha_2}, g_1^{\beta_2}, \{g_{2j}, h_{2j}\}^7_{j=1})$ and $\mathtt{sk}_1  \leftarrow (\mathtt{pk}_1, \alpha_2, \beta_2, \gamma_{2z}, \delta_{2z}, \{\gamma_{2j}, \delta_{2j}\}^7_{j=1} )$ ;
 \State $\mathtt{pk}_2 \leftarrow (g_{1z}, h_{1z}, g_{1r}, h_{1u}, g_2^{\alpha_1}, g_2^{\beta_1}, \{g_{1i}, h_{1i}\}^2_{i=1})$ and $\mathtt{sk}_2  \leftarrow (\mathtt{pk}_2, \alpha_1, \beta_1, \gamma_{1z}, \delta_{1z}, \{\gamma_{1i}, \delta_{1i}\}^2_{i=1} )$ ;
  \State set $\mathtt{pk}_g \leftarrow (\mathtt{pk}_1, \mathtt{pk}_2)$ and $\mathtt{sk}_g \leftarrow (\mathtt{sk}_1, \mathtt{sk}_2)$ ;
   \State // \textbf{The next iterations are executed to generate the CRS $\Sigma_{\mathsf{NIWI}}$}
 \State pick at random  $r, s \leftarrow \mathbb{Z}^*_n$ and set $\mathcal{U}= r g_1$ and $\mathcal{V} = s g_2$ ;
 \State set $\Sigma_{\mathsf{NIWI}}= (\mathbb{G}_1,\mathbb{G}_2,\mathbb{G}_3, e, \iota_1, p_1, \iota_2, p_2, \iota_3, \mathcal{U}, \mathcal{V})$ ;
 \State $\mathtt{vk}_g  \leftarrow (\mathtt{pk}_g, \Sigma_{\mathsf{NIWI}})$ ;
\State \textbf{return} $(\mathtt{sk}_g, \mathtt{vk}_g)$
\end{algorithmic}
\end{algorithm}
        
        \item  $\mathsf{Set\_UserID_{\mathcal{HA}}}$ -- every time, a user ($\mathcal{U}$) installs the application and wants to register, $\mathcal{HA}$ picks a secret $t_{\mathcal{U}} \in \mathbb{Z}^*_n$ and sets the user's identifier $\mathtt{ID}_\mathcal{U}$  as\\
        \centerline{
  $\mathtt{ID}_\mathcal{U} = h_{\mathcal{U}} = g_2^{t_{\mathcal{U}}}$}
        
        \item $\mathsf{Userkeygen_{\mathcal{U}}}$ -- After receiving his identifier $\mathtt{ID}_\mathcal{U}$, a user generates his pair of secret and private keys $(\mathtt{sk}_{\mathcal{U}}, \mathtt{pk}_{\mathcal{U}})$. Indeed, $\mathcal{U}$ randomly selects $q_{\mathcal{U}} \in \mathbb{Z}^*_n$ and sets $(\mathtt{sk}_{\mathcal{U}}, \mathtt{pk}_{\mathcal{U}})$ as \\ 
        \centerline{
  $\mathtt{sk}_{\mathcal{U}} = q_{\mathcal{U}} \quad ; \quad \mathtt{pk}_{\mathcal{U}} = {h_{\mathcal{U}}}^{q_{\mathcal{U}}}$ }

    \end{itemize}

   \subsection{Generation phase}
    
    \begin{itemize}
        \item  $\mathsf{Set\_CCM_{\mathcal{U}}}$ -- For each epoch $e$, $\mathcal{U}_A$ and $\mathcal{U}_B$ generate random EBIDs $\mathtt{D}^e_{\mathcal{U}_A}$ and $\mathtt{D}^e_{\mathcal{U}_B}$, respectively. $\mathcal{U}_A$ and $\mathcal{U}_B$ exchange their EBIDs and each of them executes the $\mathsf{Set\_CCM}$ algorithm. $\mathcal{U}_A$ (resp. $\mathcal{U}_B$) computes $m^e_{AB} = \mathtt{D}^e_{\mathcal{U}_A} * \mathtt{D}^e_{\mathcal{U}_B}$ and sets the common contact element between $\mathcal{U}_A$ and $\mathcal{U}_B$ as $\mathtt{CCM}^e_{AB}= \mathbf{H}(m^e_{AB})$.
        
        \item $\mathsf{S\_PSign_{\mathcal{S}}}$ -- After checking that he receives two copies of $\mathtt{CCM}^e_{AB}$, the server picks at random $r_s \leftarrow \mathbb{Z}^*_n$ and, relying on his secret key $\mathtt{sk}_{\mathcal{S}}$, he computes the two messages $\mathtt{PS}^e_{AB}$ and $\mathtt{PS'}^e_{AB}$ such that \\
        \centerline{$\mathtt{PS}^e_{AB} = \mathtt{CCM}^e_{AB} y_1  r_s + y_2 \quad and \quad \mathtt{PS'}^e_{AB} = \mathtt{CCM}^e_{AB}  r_s$}
        
        
        \item $\mathsf{P\_Sign_{\mathcal{P}}}$ -- We consider that when being requested by a user $\mathcal{U}_A$, the proxy opens a session and saves the user's identifier $\mathtt{ID}_{\mathcal{U}_A}$. This latter is used when executing the $\mathsf{P\_Sign_{\mathcal{P}}}$ algorithm (c.f. Algorithm \ref{algo:sign}) to generate a new message $\mathtt{M}^e_{AB}$ (Line 4). The proxy then signs $\mathtt{M}^e_{AB}$ (Line 6 -- Line 8) following the ${\mathsf{CSIG}}.\mathsf{Sign}$ algorithm and finally generates a proof $\pi$ (Line 10 -- Line 16) w.r.t. the ${\mathsf{GSIG}}.\mathsf{Sign}$ algorithm.

\def\ALGO{\textsc{ProxyGr.Join}}
 \begin{algorithm}[!ht]
 \caption{ $\mathsf{Join\_ProxyGr_{\mathcal{P}/\mathcal{GM}}}$ algorithm}
 \label{algo:join}
 \begin{algorithmic}[1]
 \State {\bf Input:} the security parameter $\lambda$ and the secret key of the group manager $\mathtt{sk}_g$
 \State {\bf Output:} the pair of keys of a proxy group member $(\mathtt{sk}_p, \mathtt{pk}_p)$ and the signature $\sigma_p$ over the public key the public $\mathtt{pk}_p$ 
 \medskip
 \State // \textbf{The next is set by $\mathcal{P}$}
 \State pick at random $g_{r}, h_{u} \leftarrow \mathbb{G}^*_1$, $\gamma, \delta \leftarrow \mathbb{Z}^*_n$ ;
 \State compute $g_{\gamma} \leftarrow {g_{r}}^{\gamma}$ and $h_{\delta} \leftarrow {h_{u}}^{\delta}$ ;
 \State pick at random $\gamma_{z}, \delta_{z} \leftarrow \mathbb{Z}^*_n$ ;
 \State compute $g_{z} \leftarrow {g_{r}}^{\gamma_{z}}$ and $h_{z} \leftarrow{h_{u}}^{\delta_{z}}$ ;
 \State pick at random $\alpha, \beta \leftarrow \mathbb{Z}^*_n$ ;
 \State set $\mathtt{pk}_p= (g_{z}, h_{z}, g_{r}, h_{u}, g_2^{\alpha}, g_2^{\beta}, g_{\gamma}, h_{\delta})$ and $\mathtt{sk}_p = (\mathtt{pk}_p, \alpha, \beta, \gamma_{z}, \delta_{z}, \gamma, \delta)$ ;
 \State // \textbf{The next is set by $\mathcal{GM}$}
 \State $\sigma_p \leftarrow \mathsf{XSIG}.\mathsf{Sign}(\mathtt{sk}_g, \mathtt{pk}_p)$ ;
\State \textbf{return} $(\mathtt{sk}_p, \mathtt{pk}_p, \sigma_p)$
\end{algorithmic}
\end{algorithm}

     \def\ALGO{\textsc{M.Sign}}
 \begin{algorithm*}[!ht]
 \caption{ $\mathsf{P\_Sign_{\mathcal{P}}}$ algorithm}
 \label{algo:sign}
 \begin{algorithmic}[1]
 \State {\bf Input:} the public parameters of the proxies' group $\mathtt{vk}_g$, the secret key $\mathtt{sk}_p$, the signature $\sigma_p$ over the proxy's public key, the identifier $\mathtt{ID}_{\mathcal{U}_A}$ of user $\mathcal{U}_A$ and the message $\mathtt{PS}$
 \State {\bf Output:} a message $\mathtt{M}$, the corresponding signature $\sigma_m$ and a proof $\pi$
\medskip
\State // \textbf{The next is executed by $\mathcal{P}$ to generate $\mathtt{M}$}
\State compute $\mathtt{M} = {\mathtt{ID}_{\mathcal{U}_A}}^{\mathtt{PS}}$;
\State // \textbf{The next is executed by $\mathcal{P}$ to sign $\mathtt{M}$}
\State pick at random $\zeta, \rho, \tau, \varphi, \omega \leftarrow \mathbb{Z}^*_n$ ;
\State run $z = g_2^{\zeta}$, $r = {g_2}^{\alpha-\rho \tau - {\gamma_{z}} \zeta}  {\mathtt{M}}^{-\gamma}$, $s= {g_{r}}^{\rho}$, $t = {g_2}^{\tau}$, $u = {g_2}^{\beta-\varphi \omega - {\delta_{z}} \zeta}{\mathtt{M}}^{-\delta}$, $v = {h_{u}}^{\varphi}$, $w = {g_2}^{\omega}$ ;
\State set $\sigma_m = (z, r, s, t, u, v, w)$ ;
\State // \textbf{The next is set to generate a proof on equations $\{(\Vec{\mathcal{A}_{im}}, \Vec{\mathcal{B}_{im}}, \Gamma_{im}, t_{im})\}^2_{i=1}$ where $\Vec{\mathcal{A}_{im}}$ = $\Vec{\mathcal{B}_{im}}$ = $\Vec{0}$, $\Gamma_{im}$ = $\mathcal{MAT}_{3 \times 3}(1)$ for $i=1,2$, $t_{1m}$ = $t_{2m}$ = $1_{\mathbb{G}_3}$}
\State $\Vec{\mathcal{X}}_{1m} = (g_z, g_r, s)$, $\Vec{\mathcal{X}}_{2m} = (h_z, h_u, v) $, $\Vec{\mathcal{Y}}_{1m} = (z, {g_2}^{\alpha-\rho \tau - {\gamma_{z}} \zeta}, t)$ and $\Vec{\mathcal{Y}}_{2m} = (z, {g_2}^{\beta-\rho \tau - {\delta{z}} \zeta}, w)$ ;
\State $ \pi_m = \{(\Vec{\mathcal{C}_{im}}, \Vec{\mathcal{D}_{im}}, \pi_{im}, \theta_{im})\}^2_{i=1} \leftarrow \mathsf{NIWI}.\mathsf{Proof}(\mathtt{vk}_g$, $\{(\Vec{\mathcal{A}_{im}}, \Vec{\mathcal{B}_{im}}, \Gamma_{im}, t_{im})\}^2_{i=1},$ $\{(\Vec{\mathcal{X}_{im}}, \Vec{\mathcal{Y}_{im}})\}^2_{i=1})$ ;
\State // \textbf{The next is set to generate a proof on equations $\{(\Vec{\mathcal{A}_{ip}}, \Vec{\mathcal{B}_{ip}}, \Gamma_{ip}, t_{ip})\}^4_{i=1}$ where $\Vec{\mathcal{A}}_{1p} = (g_1^{\alpha_2})$, $\Vec{\mathcal{A}}_{2p} = (g_1^{\beta_2})$, $\Vec{\mathcal{A}}_{3p} = (g_{1z}, g_{1r}) $, $\Vec{\mathcal{A}}_{4p} = (h_{1z}, h_{1u})$, $\Vec{\mathcal{B}}_{1p} = (g_{2z}, g_{2r}) $, $\Vec{\mathcal{B}}_{2p} = (h_{2z}, h_{2u}) $, $\Vec{\mathcal{B}}_{3p} = (g_2^{\alpha_1})$, $\Vec{\mathcal{B}}_{4p} = (g_2^{\beta_1})$, $\Gamma_{1p} = (\gamma_{2z}, -1)$, $\Gamma_{2p} = (\delta_{2z}, -1)$, $\Gamma_{3p} = (\gamma_{1z}, -1)$, $\Gamma_{4p} = (\delta_{1z}, -1)$, $t_{1p}= e(g_1^{\alpha_2}, g_{2r})$, $t_{2p}= e(g_1^{\beta_2}, h_{2u})$, $t_{3p}= e(g_{1r}, g_2^{\alpha_1})$ and $t_{4p}= e(h_{1u}, g_2^{\beta_1})$ }
\State  $\Vec{\mathcal{X}}_{1p} = (z_1, {g_1}^{\alpha_2-\rho_1 \tau_1 - {\gamma_{2z}} \zeta_1})$, $\Vec{\mathcal{X}}_{2p} = (z_1, {g_1}^{\beta_2-\rho_1 \tau_1 - {\delta_{2z}} \zeta_1})$, $\Vec{\mathcal{X}}_{3p}= (g_{1r}) $, $\Vec{\mathcal{X}}_{4p} = (h_{1u}) $, $\Vec{\mathcal{Y}}_{1p} = (g_{2r}) $,  $\Vec{\mathcal{Y}}_{2p} = (h_{2u})$,    $\Vec{\mathcal{Y}}_{3p} = (z_2, {g_2}^{\alpha_1-\rho_2 \tau_2 - {\gamma_{1z}} \zeta_2})$,  and  $\Vec{\mathcal{Y}}_{4p} = (z_2, {g_2}^{\beta_1-\rho_2 \tau_2 - {\delta_{1z}} \zeta_2})$ ;
\State $ \pi_p = \{(\Vec{\mathcal{C}_{ip}}, \Vec{\mathcal{D}_{ip}}, \pi_{ip}, \theta_{ip})\}^4_{i=1} \leftarrow \mathsf{NIWI}.\mathsf{Proof}(\mathtt{vk}_g, \{(\Vec{\mathcal{A}_{ip}}, \Vec{\mathcal{B}_{ip}}, \Gamma_{ip}, t_{ip})\}^4_{i=1},$ $\{(\Vec{\mathcal{X}_{ip}}, \Vec{\mathcal{Y}_{ip}})\}^4_{i=1})$ ;
\State set $\pi_p= ((\pi_{ip}, \theta_{ip})_{i=1}^4)$ ; 
\State set $\pi = (\pi_p, \pi_m)$ ;
\State \textbf{return} $(\mathtt{M}, \sigma_m, \pi)$
\end{algorithmic}
\end{algorithm*}

     \end{itemize}

    \subsection{Verification phase}
    \begin{itemize}
        \item $\mathsf{Sig\_Verify_{\mathcal{HA}}}$ -- Given a contact list of user $\mathcal{U}_A$ (a list of tuples $(\mathtt{CCM}$, $\mathtt{M}$,  $\pi)$ such that $\pi$ can be parsed as $\{(\Vec{\mathcal{A}_i}, \Vec{\mathcal{B}_i}, \Gamma_i, t_i)\}^N_{i=1}$, $\{(\Vec{\mathcal{C}_i}, \Vec{\mathcal{D}_i}, \pi_i, \theta_i)\}^N_{i=1}$), $\mathcal{HA}$ verifies the validity of the group signature of each message, w.r.t. ${\mathsf{GSIG}}.\mathsf{Verify}$ algorithm (cf. Appendix \ref{sec:gs}).
        
        \item $\mathsf{CCM\_Verify_{\mathcal{HA}}}$ -- We consider that $\mathcal{HA}$ requests from $\mathcal{S}$ the message $\mathtt{PS'}$ corresponding to a contact message $\mathtt{CCM}$ contained in the contact list of user $\mathcal{U}_A$. The message $\mathtt{PS'}$ is taken as input with the message $\mathtt{M}$ (corresponding to $\mathtt{CCM}$), the server's public key $\mathtt{pk}_{\mathcal{S}}$ and the secret value $t_{\mathcal{U}_A}$ specific to user $\mathcal{U}_A$, to the $\mathsf{CCM\_Verify_{HA}}$ algorithm that checks if the equation \ref{newequverify} holds: 
        
            
             \begin{equation}
                \mathtt{M} = {Y_1}^{t_{\mathcal{U}_A} \mathtt{PS'}} {Y_2}^{t_{\mathcal{U}_A}}
                 \label{newequverify}
            \end{equation}
\end{itemize}

\section{Security and Privacy Analysis}
\label{sec:secanal}

In this section, we prove that \textsf{SPOT} achieves the defined security and privacy requirements with respect to the threat models defined in Section \ref{sec:thrmodel}, by relying on the following theorems and lemmas. 


\begin{theorem}[Unforgeability]
\label{theo:unforg2}
If a probabilistic-polynomial time (PPT) adversary $\mathcal{A}$ wins $\textbf{Exp}_{\mathcal{A}}^{unforg}$, as defined in Section \ref{sec:unforg}, with a non-negligible advantage $\epsilon$, then a PPT simulator $\mathcal{B}$ can be constructed to break the CDH assumption with a non-negligible advantage $\epsilon$.
\end{theorem}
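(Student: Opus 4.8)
The plan is to build a PPT simulator $\mathcal{B}$ that plants a CDH instance $(g_2, A = g_2^a, B = g_2^b)$ inside the server's public key and then reads the CDH answer off any successful forgery. First I would pin down what winning actually forces on $\mathcal{A}$. Since $\mathtt{ID}_{\mathcal{U}} = g_2^{t_{\mathcal{U}}}$, the message produced by $\mathsf{P\_Sign}$ is $\mathtt{M}^* = \mathtt{ID}_{\mathcal{U}}^{\mathtt{PS}^*} = g_2^{t_{\mathcal{U}}\mathtt{PS}^*}$, whereas $\mathsf{CCM\_Verify}$ (Equation~\ref{newequverify}) requires $\mathtt{M}^* = {Y_1}^{t_{\mathcal{U}}\mathtt{PS'}}{Y_2}^{t_{\mathcal{U}}} = g_2^{t_{\mathcal{U}}(y_1\mathtt{PS'} + y_2)}$. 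Because $\mathbb{G}_2$ has prime order $n$ and $t_{\mathcal{U}} \not\equiv 0$, equality of these two group elements is equivalent to the scalar identity $\mathtt{PS}^* \equiv y_1\mathtt{PS'} + y_2 \pmod n$. Hence a winning adversary is precisely one that outputs the honest partial signature $y_1\mathtt{PS'} + y_2$ attached to the target tag $\mathtt{PS'}$, and this value is uniquely determined, so there is no alternative way to satisfy the verification.

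With this observation, $\mathcal{B}$ sets $Y_1 := A$ (so that $y_1 = a$ is implicitly unknown to it), draws $y_2 \leftarrow \mathbb{Z}^*_n$ itself, and runs every other setup step honestly, retaining $\mathtt{sk}_g, \mathtt{sk}_p, t_{\mathcal{U}}, q_{\mathcal{U}}$ in the clear; thus all values handed to $\mathcal{A}$ -- namely $\mathtt{vk}_g, \mathtt{sk}_p, \mathtt{pk}_p, \sigma_p, \mathtt{ID}_{\mathcal{U}}, pp$ -- carry their correct distribution. The challenge tag $\mathtt{PS'}$ is sampled uniformly, which is faithful because in the real game $\mathtt{PS'} = \mathtt{CCM}\cdot r_s$ with $r_s$ uniform and secret. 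I would answer each $\mathsf{S\_PSign}$ query by returning a fresh uniform $\mathtt{PS}^i$. The crucial point is that this simulation is perfect even though $\mathcal{B}$ ignores $y_1 = a$: in $\textbf{Exp}_{\mathcal{A}}^{unforg}$ the oracle only ever discloses $\mathtt{PS}^i = y_1\,\mathtt{CCM}^i r_s^i + y_2$ while $\mathtt{PS'}^i = \mathtt{CCM}^i r_s^i$ (equivalently $r_s^i$) is kept secret, so as $r_s^i$ ranges uniformly the value $\mathtt{PS}^i$ is itself uniform and carries no information about $(y_1,y_2)$; $\mathcal{B}$ may therefore fix $\mathtt{PS}^i$ at random and let it implicitly define the never-revealed $r_s^i$.

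Finally, when $\mathcal{A}$ returns $\mathtt{PS}^*$ and the verification passes, $\mathcal{B}$ uses the scalar identity above to recover $a = (\mathtt{PS}^* - y_2)\,{\mathtt{PS'}}^{-1} \bmod n$ (well defined since $\mathtt{PS'} \neq 0$) and outputs $B^a = g_2^{ab}$, solving the instance. As $\mathcal{B}$ succeeds exactly when $\mathcal{A}$ wins and the simulation is distributed identically to the real experiment, $\mathcal{B}$'s advantage equals $\mathcal{A}$'s advantage $\epsilon$, yielding the tight bound claimed. I expect the main obstacle to be the simulation-soundness argument of the previous paragraph: one must verify that the independence of the oracle answers from $(y_1,y_2)$ -- guaranteed only by the secrecy of each $\mathtt{PS'}^i$ together with the freshness condition that the target $\mathtt{PS'}$ is not obtainable by combining the $\mathtt{PS}^i$ -- really renders $\mathcal{A}$'s view in the simulation identical to its view in $\textbf{Exp}_{\mathcal{A}}^{unforg}$, so that no usable correlation between the queries and the challenge can leak the embedded exponent $a$.
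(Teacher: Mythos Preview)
Your proof is correct, but it takes a different embedding from the paper's. The paper plants the CDH instance as $g_2^{t_{\mathcal{U}}} := g_2^a$ and $Y_2 := g_2^b$, keeps $y_1$ known to $\mathcal{B}$, and then extracts $g_2^{ab}$ \emph{from the group element} $\mathtt{M}^* = g_2^{t_{\mathcal{U}}(y_1\mathtt{PS'}+y_2)}$ by dividing out $(g_2^a)^{y_1\mathtt{PS'}}$. You instead plant only $Y_1 := g_2^a$, keep $y_2$ and $t_{\mathcal{U}}$ known, and extract the \emph{scalar} $a = (\mathtt{PS}^*-y_2)\mathtt{PS'}^{-1}$ directly from $\mathcal{A}$'s output before exponentiating $B$. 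Your route therefore actually breaks discrete log in $\mathbb{G}_2$ (and only uses $B$ in the final step), which is formally stronger than what the theorem asks; the paper's route is a ``native'' CDH reduction that genuinely needs both challenge elements inside the simulation. On the other hand, you are considerably more careful than the paper about the $\mathsf{S\_PSign}$ oracle: the paper's short proof never explains how $\mathcal{B}$ answers queries without knowing $y_2=b$, whereas your uniformity argument (each $\mathtt{PS}^i$ is statistically random because the hidden $r_s^i$ masks it) is exactly what is needed, and in fact the same argument is what would be required to complete the paper's version as well. One small refinement: in the real scheme $r_s\in\mathbb{Z}_n^*$, so $\mathtt{PS}^i$ and $\mathtt{PS'}$ are uniform over $\mathbb{Z}_n$ minus a single residue; your ``fresh uniform'' sampling is only statistically (not perfectly) faithful, off by $O(1/n)$, which is harmless but worth stating.
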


\begin{proof}
In this proof, we show that a simulator $\mathcal{B}$ can be constructed with the help of an adversary $\mathcal{A}$ having advantage $\epsilon$ against \textsf{SPOT} scheme.

The CDH challenger $\mathcal{C}$ sends to $\mathcal{B}$ the tuple $(g_2,g_2^a,g_2^b)$, where $a,b \leftarrow \mathbb{Z}^*_n$ are randomly selected. $\mathcal{C}$ asks $\mathcal{B}$ to compute $g_2^{ab}$. Then, $\mathcal{B}$ sets $g_2^{t_{\mathcal{U}}}$ to $g_2^a$ and $g_2^{y_2}$ to $g_2^b$. During the challenge phase, $\mathcal{B}$ randomly selects $y_1 \in \mathbb{Z}^*_n$ and sends $g_2^{y_1}$ to $\mathcal{A}$ as part of the server's public key. 
$\mathcal{A}$ forges the partial signature over the message $\mathtt{PS'}$ and generates the message $\mathtt{M}^*$ with advantage $\epsilon$: $\mathtt{M}^*$ = ${\mathtt{ID}_{\mathcal{U}}}^{\mathtt{PS}^*}$ = ${g_2}^{t_{\mathcal{U}} (\mathtt{PS'} y_1 + y_2)}$. The tuple $({g_2}^{t_{\mathcal{U}}},\mathtt{PS'}, \mathtt{M}^*)$ is sent back to $\mathcal{B}$. Upon receiving this tuple and knowing $y_1$, $\mathcal{B}$ can compute the value of $g_2^{t_{\mathcal{U}} y_2}$ which is the same as $g_2^{ab}$ and can then send the result to the CDH challenger. As such, $\mathcal{B}$ succeeds the forgery against the CDH assumption with advantage $\epsilon$.
\end{proof}


\begin{theorem}[Unlinkability]
\label{theo:unlink}
Our \textsf{SPOT} system achieves the unlinkability requirement with respect to the \emph{group-signature unlinkability} and \emph{multi-CCM unlinkability} properties.
\end{theorem}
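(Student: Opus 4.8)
The plan is to reduce the \emph{group-signature unlinkability} sub-property to the computational witness-indistinguishability of the Groth--Sahai $\mathsf{NIWI}$ proof system, and to argue \emph{multi-CCM unlinkability} directly from the distribution of the common contact messages. The first move is to isolate the only $b$-dependent part of the $\mathsf{LoRSig}$ output. Inspecting $\mathsf{P\_Sign}_{\mathcal{P}}$, the message $\mathtt{M}^* = \mathtt{ID}_{\mathcal{U}}^{\mathtt{PS}^*}$ depends only on the user identifier and on $\mathtt{PS}^*$, hence is identical whichever proxy is used, while the $\mathsf{CSIG}$ signature $\sigma_m$ and the membership certificate $\sigma_p$ are never released: they occur only inside the witness $wit = (\mathtt{pk}_p, \sigma_p, \sigma_m)$ of the proof. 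Thus the returned pair $((\mathtt{M}^*,\pi^*),(\mathtt{M}^*,\pi^*_b))$ differs, as $b$ varies, solely in $\pi^*_b$, whose public statement $pub = (\mathtt{pk}_g, \mathtt{M}^*)$ is itself independent of $b$.

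Next I would verify that both candidate proxies furnish valid witnesses for this \emph{same} statement. Since $\mathcal{P}_0$ and $\mathcal{P}_1$ are honestly enrolled through $\mathsf{Join\_ProxyGr}$, each certificate $\sigma_{p_j}$ is a valid $\mathsf{XSIG}$ signature of $\mathcal{GM}$ on $\mathtt{pk}_{p_j}$, and each proxy produces a valid $\mathsf{CSIG}$ signature $\sigma_{m,j}$ on $\mathtt{M}^*$; hence $w_j = (\mathtt{pk}_{p_j}, \sigma_{p_j}, \sigma_{m,j})$ satisfies both verification equations certified by $\pi$, for $j \in \{0,1\}$. Consequently $\pi^*_b$ is an $\mathsf{NIWI}$ proof of the fixed statement $pub$, computed with witness $w_0$ when $b=0$ and with $w_1$ when $b=1$.

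The reduction then builds a witness-indistinguishability distinguisher $\mathcal{B}$ that runs $\mathcal{A}$ internally. $\mathcal{B}$ generates every key honestly — in particular it holds $\mathtt{sk}_{p_0}$, $\mathtt{sk}_{p_1}$ and $\mathtt{sk}_g$ — so it answers the $\mathsf{P\_Sign}$ oracle by executing the real algorithm, and for the challenge it submits the two witness sequences $\{w_0\}$, $\{w_1\}$ for the common statement $pub$ to its $\mathsf{NIWI}$ challenger, embeds the returned proof as $\pi^*_b$, and computes $\pi^*$ itself from $w_0$. The guess $b'$ of $\mathcal{A}$ is relayed as $\mathcal{B}$'s decision: when the challenger uses $w_0$ (resp. $w_1$) the view of $\mathcal{A}$ is exactly that of the game with $b=0$ (resp. $b=1$), so $\mathcal{A}$'s distinguishing advantage equals $\mathcal{B}$'s advantage against WI, which is negligible. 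If $\mathcal{A}$ is allowed several $\mathsf{LoRSig}$ queries under the same bit, I would insert a standard hybrid over the polynomially many queries, switching one proof from $w_0$ to $w_1$ at a time with each step bounded by the WI advantage.

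For \emph{multi-CCM unlinkability} I would argue that each $\mathtt{CCM}^e = \mathbf{H}(\mathtt{D}^e_{\mathcal{U}_A} * \mathtt{D}^e_{\mathcal{U}_B})$ is a hash of a product of freshly sampled, per-epoch random EBIDs, so the values seen by $\mathcal{S}$ are (under the usual modelling of $\mathbf{H}$) uniform and mutually independent across epochs and carry no identifier of the issuing user; hence no server-side correlation can link two contact messages to the same user beyond negligible probability. The step I expect to be the main obstacle is the second one: establishing rigorously that $\mathtt{M}^*$ and every quantity entering $pub$ are genuinely independent of the signing proxy, so that WI — which applies only for a \emph{fixed} statement $x$ — can be invoked legitimately; correctly maintaining perfectly simulated oracle answers across the hybrid is the accompanying bookkeeping difficulty.
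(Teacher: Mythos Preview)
Your proposal is correct and follows essentially the same route as the paper: the theorem is split into the two sub-properties, group-signature unlinkability is reduced to the computational witness-indistinguishability of the Groth--Sahai $\mathsf{NIWI}$ system, and multi-CCM unlinkability is argued from the fresh per-epoch randomness of the EBIDs together with the pseudo-randomness of $\mathbf{H}$. If anything, your write-up is more careful than the paper's: you explicitly check that $\mathtt{M}^*$ and the public statement $pub=(\mathtt{pk}_g,\mathtt{M}^*)$ are proxy-independent before invoking WI, and you add the hybrid over multiple $\mathsf{LoRSig}$ queries, both of which the paper leaves implicit.
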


We prove Theorem \ref{theo:unlink} through Lemma \ref{lem:groupunlink}  and Lemma \ref{lem:multiccmunlink} with respect to \emph{group-signature unlinkability} and \emph{multi-CCM unlinkability} properties, respectively.

\begin{lemma}[Group-signature unlinkability]
\label{lem:groupunlink}
\textsf{SPOT} satisfies the group signature unlinkability requirement with respect to the computational witness indistinguishability property of the NIWI proof. 
\label{gsunlink}
\end{lemma}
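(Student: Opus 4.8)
The plan is to reduce the \emph{group-signature unlinkability} of \textsf{SPOT} to the computational witness-indistinguishability (WI) of the Groth--Sahai NIWI system invoked inside $\mathsf{P\_Sign}$. The key observation is that a group signature on a message $\mathtt{M}$ is precisely a NIWI proof $\pi$ for the $\mathcal{NP}$ statement
\begin{equation*}
\begin{aligned}
\exists\,(\mathtt{pk}_p,\sigma_p,\sigma_m):\; &\mathsf{XSIG}.\mathsf{Verify}(\mathtt{pk}_g,\mathtt{pk}_p,\sigma_p)=1 \\
&\wedge\; \mathsf{CSIG}.\mathsf{Verify}(\mathtt{pk}_p,\mathtt{M},\sigma_m)=1,
\end{aligned}
\end{equation*}
whose public instance $pub=(\mathtt{pk}_g,\mathtt{M})$ does \emph{not} mention the signer. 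In the $\mathsf{LoRSig}$ experiment both returned proofs are on the very same $\mathtt{M}^*=\mathtt{ID}_{\mathcal{U}}^{\mathtt{PS}^*}$, hence on one and the same instance $x^*$; only the witness differs, namely $w_0=(\mathtt{pk}_{p_0},\sigma_{p_0},\sigma_{m,0})$ for $\mathcal{P}_0$ and $w_1=(\mathtt{pk}_{p_1},\sigma_{p_1},\sigma_{m,1})$ for $\mathcal{P}_1$, and both are valid witnesses for $x^*$. Linking the challenge proof to a proxy therefore amounts to telling which of two equally valid witnesses was used, which WI forbids.

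Concretely, I would build a distinguisher $\mathcal{B}$ against WI as follows. $\mathcal{B}$ receives the common reference string from the WI challenger and embeds it as the $\Sigma_{\mathsf{NIWI}}$ component of $\mathtt{vk}_g$, while generating $\mathtt{pk}_g$, $\mathtt{sk}_g$ itself via $\mathsf{XSIG}.\mathsf{Key}$ and running every other initialization step honestly; in particular it creates both proxy tuples $(\mathtt{sk}_{p_j},\mathtt{pk}_{p_j},\sigma_{p_j})$ for $j\in\{0,1\}$ and the challenge pair $(\mathtt{PS}^*,\mathtt{PS'}^*)$. Since $\mathsf{NIWI}.\mathsf{Proof}$ is a public algorithm and $\mathcal{B}$ knows every proxy secret key, it answers all $\mathsf{P\_Sign}$ oracle queries on its own. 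To answer the single $\mathsf{LoRSig}$ query it computes $\mathtt{M}^*$, forms the instance $x^*$ and the witnesses $w_0,w_1$ above, produces the first proof $\pi^*$ from $w_0$ itself, and hands $(x^*,w_0,w_1)$ to the WI challenger to obtain the second proof $\pi^*_\beta$, which it returns as $\pi^*_b$. When $\mathcal{A}$ outputs its guess $b'$, $\mathcal{B}$ forwards $b'$ as its guess for the hidden WI bit $\beta$.

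Because the choice $\beta=0$ reproduces exactly the distribution of $\mathsf{LoRSig}$ with $b=0$ (second signature from $\mathcal{P}_0$) and $\beta=1$ reproduces $b=1$ (second signature from $\mathcal{P}_1$), while every other part of the view is generated with the real algorithms, $\mathcal{B}$ simulates $\textbf{Exp}_{\mathcal{A}}^{unlink}$ perfectly and $\mathsf{Adv}^{\mathrm{WI}}_{\mathcal{B}}=\bigl|\Pr[\textbf{Exp}_{\mathcal{A}}^{unlink}(\lambda)=1]-\tfrac{1}{2}\bigr|$. Invoking the assumed WI property of the NIWI system then gives $\Pr[\textbf{Exp}_{\mathcal{A}}^{unlink}(\lambda)=1]=\tfrac{1}{2}\pm\kappa(\lambda)$, as required.

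The step I expect to be the main obstacle is justifying that the two proxies really do furnish two witnesses for one and the same instance, and that the abstract WI property of Section~\ref{subsec:math} genuinely applies to the concrete proof assembled in $\mathsf{P\_Sign}$. This requires that $pub=(\mathtt{pk}_g,\mathtt{M}^*)$ carry no proxy-dependent data, so that $\mathtt{pk}_p$ lives entirely inside the witness, and that the Groth--Sahai commitments and proof elements built in lines 10--16 (the components $(\mathcal{C}_i,\mathcal{D}_i,\pi_i,\theta_i)$ of $\pi_m$ and $\pi_p$) hide the committed $\mathtt{pk}_p,\sigma_p,\sigma_m$. I would therefore need to verify that the CRS fixed in Algorithm~\ref{alg:setup} places the system in the witness-indistinguishable setting for the pairing-product equations being proved, so that WI applies to each of the two sub-proofs separately and hence to their concatenation $\pi=(\pi_p,\pi_m)$.
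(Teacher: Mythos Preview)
Your proposal is correct and follows essentially the same route as the paper: both argue that the two challenge group signatures in $\textbf{Exp}_{\mathcal{A}}^{unlink}$ are NIWI proofs for one and the same public instance $(\mathtt{pk}_g,\mathtt{M}^*)$ computed from two different witnesses $(\mathtt{pk}_{p_j},\sigma_{p_j},\sigma_{m,j})$, and then reduce a successful distinguisher $\mathcal{A}$ to a WI distinguisher $\mathcal{B}$ that forwards $\mathcal{A}$'s guess. Your reduction is in fact more explicit than the paper's (you spell out how $\mathcal{B}$ embeds the challenger's CRS into $\Sigma_{\mathsf{NIWI}}$, simulates the remaining oracles with the honestly generated keys, and routes only the second $\mathsf{LoRSig}$ proof through the WI challenger), whereas the paper simplifies by restricting attention to the $\pi_m$ component on the grounds that the equations underlying $\pi_p$ are proxy-independent; the underlying argument is the same.
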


\begin{proof}
In this proof, the objective is to show that the adversary is not able to distinguish group signatures issued by the same proxy. 
For this purpose, we suppose that, for each session $i$, the adversary receives the message $M^*$ (i.e., the same message $M^*$ is returned by each oracle) and the NIWI proof $\pi^i$ = $(\pi_m^i, \pi_p^i)$ = $((\pi^i_{jm}, \theta^i_{jm})_{j=1}^2$, $(\pi^i_{jp}, \theta^i_{jp})_{j=1}^4)$. \\
To simplify the proof, we will only consider the NIWI proof $\pi_m^i$, as the statements used to generate the proofs $\pi_p^i$ do not give any information about the proxy generating the proof (i.e., statements do not include the proxy's public key). Thus, for each session $i$, the adversary is given the tuples $(\Vec{\mathcal{C}^i_{k1}}, \Vec{\mathcal{D}^i_{k1}}, \pi^i_{k1}, \theta^i_{k1})$ and $(\Vec{\mathcal{C}^i_{k2}}, \Vec{\mathcal{D}^i_{k2}}, \pi^i_{k2}, \theta^i_{k2})$ referred to as the group signature generated by a proxy $P_k$, where $k \in \{0,1\}$.
During the challenge phase, the adversary is also given two group signatures. The first signature is represented by the tuples $(\Vec{\mathcal{C}^*_{1}}, \Vec{\mathcal{D}^*_{1}}, \pi^*_{1}, \theta^*_{1})$ and $(\Vec{\mathcal{C}^*_{2}}, \Vec{\mathcal{D}^*_{2}}, \pi^*_{2}, \theta^*_{2})$ generated by proxy $\mathcal{P}_0$, while the second one is represented by the tuples $(\Vec{\mathcal{C}^*_{b1}}, \Vec{\mathcal{D}^*_{b1}}, \pi^*_{b1}, \theta^*_{b1})$ and $(\Vec{\mathcal{C}^*_{b2}}, \Vec{\mathcal{D}^*_{b2}}, \pi^*_{b2}, \theta^*_{b2})$ and is generated by a proxy $\mathcal{P}_b$ ($b \in \{0,1\}).$

Let us consider a simulator $\mathcal{B}$ that can be constructed with the help of an adversary $\mathcal{A}$ having advantage $\epsilon$ against \textsf{SPOT} scheme.
A challenger $\mathcal{C}$ selects two couples of witnesses $(X_0, Y_0)$ and $(X_1, Y_1)$. $\mathcal{C}$ computes a commitment $(C, D)$ over $(X_0, Y_0)$, and then selects a bit $b \in \{0,1\}$ and computes a commitment $(C'_b, D'_b)$ over $(X_b, Y_b)$. $\mathcal{C}$ asks $\mathcal{B}$ to guess the bit $b$. Then, $\mathcal{B}$ selects the tuples $(A, B, \Gamma, t)$ and $(A'_b, B'_b, \Gamma'_b, t'_b)$ and computes the proofs $(\pi, \theta)$ and $(\pi'_b, \theta'_b)$. $\mathcal{B}$ returns the two proofs to $\mathcal{A}$. Finally, $\mathcal{A}$ outputs a bit $b'$ that it sends to $\mathcal{B}$. This latter outputs the same bit $b'$ to its own challenger $\mathcal{C}$. As such, $\mathcal{A}$ succeeds in breaking the group-signature unlinkability  with advantage $\epsilon$, which is the same as breaking the computational witness-indistinguishability property.

\end{proof}

\begin{corollary}
If \textsf{SPOT} satisfies the unlinkability property, then the proxies' group signature (i.e., NIWI proof) fulfills the anonymity requirement stating that it is not possible to identify the proxy that issued a particular group signature.  
\end{corollary}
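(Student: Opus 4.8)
The plan is to prove the Corollary by contraposition, reducing it to the group-signature unlinkability already established in Lemma \ref{lem:groupunlink}. Concretely, I would show that any \emph{PPT} adversary $\mathcal{A}_{anon}$ that breaks proxy anonymity --- i.e.\ that identifies, with advantage non-negligibly above the $\frac{1}{2}$ random guess, which of two candidate proxies $\mathcal{P}_0, \mathcal{P}_1$ produced a given group signature (NIWI proof) $\pi$ over a message $\mathtt{M}^*$ --- can be converted into a \emph{PPT} adversary $\mathcal{B}$ that wins $\textbf{Exp}_{\mathcal{A}}^{unlink}$. Since Lemma \ref{lem:groupunlink} guarantees unlinkability under the computational witness-indistinguishability of the NIWI proof, no such $\mathcal{A}_{anon}$ can exist, which is exactly the asserted anonymity of the issuing proxy.

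First I would fix the proxy-anonymity experiment so that its setup mirrors that of the unlinkability game: the challenger runs $\mathsf{Setup\_ProxyGr}$ and two independent executions of $\mathsf{Join\_ProxyGr}$ to obtain $(\mathtt{sk}_{p_0}, \mathtt{pk}_{p_0}, \sigma_{p_0})$ and $(\mathtt{sk}_{p_1}, \mathtt{pk}_{p_1}, \sigma_{p_1})$, fixes a common message $\mathtt{PS}^*$, tosses a secret bit $\beta \in \{0,1\}$, and hands $\mathcal{A}_{anon}$ the group public data $\mathtt{vk}_g, \mathtt{pk}_{p_0}, \mathtt{pk}_{p_1}$ together with a single challenge $(\mathtt{M}^*, \pi^*_\beta)$ produced by $\mathcal{P}_\beta$ via $\mathsf{P\_Sign}$. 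Then I would build the reduction $\mathcal{B}$ playing $\textbf{Exp}_{\mathcal{A}}^{unlink}$: $\mathcal{B}$ receives the unlinkability public inputs, calls the $\mathsf{LoRSig}$ oracle once to obtain the pair $((\mathtt{M}^*, \pi^*), (\mathtt{M}^*, \pi^*_b))$ --- where by construction $\pi^*$ is always generated by $\mathcal{P}_0$ and $\pi^*_b$ by $\mathcal{P}_b$ for the hidden bit $b$ --- discards $\pi^*$, forwards the challenge $(\mathtt{M}^*, \pi^*_b)$ with $\mathtt{pk}_{p_0}, \mathtt{pk}_{p_1}$ to $\mathcal{A}_{anon}$, relays any signing queries of $\mathcal{A}_{anon}$ to its own $\mathsf{P\_Sign}$ oracle, and finally outputs $b' = g$ where $g \in \{0,1\}$ is $\mathcal{A}_{anon}$'s guess of the issuing proxy.

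The key observation is that both signatures returned by $\mathsf{LoRSig}$ are over the \emph{identical} message $\mathtt{M}^*$, so the only information that could distinguish $\pi^*_b$ is the identity of its signer. Consequently the view $\mathcal{B}$ presents to $\mathcal{A}_{anon}$ is exactly that of the proxy-anonymity experiment with hidden bit $\beta = b$, and $\mathcal{B}$ guesses $b$ correctly precisely when $\mathcal{A}_{anon}$ identifies the signer, giving $\mathrm{Adv}_{\mathcal{B}}^{unlink} = \mathrm{Adv}_{\mathcal{A}_{anon}}^{anon}$. A non-negligible anonymity advantage would therefore contradict Lemma \ref{lem:groupunlink}, so the advantage must be negligible and the proxy remains anonymous.

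The main obstacle I anticipate is the faithfulness of the simulation rather than any calculation: I must argue that embedding a single $\mathsf{LoRSig}$ output as the anonymity challenge yields exactly the challenge distribution $\mathcal{A}_{anon}$ expects --- in particular that reusing the same $\mathtt{vk}_g$, proxy public keys and certification signatures across both games, and fixing the message to $\mathtt{M}^*$, introduces no detectable bias --- and that every auxiliary oracle query of $\mathcal{A}_{anon}$ is perfectly answerable through $\mathcal{B}$'s own $\mathsf{P\_Sign}$ oracle. Once this indistinguishability of views is verified, the advantage equality and the resulting contradiction with unlinkability follow immediately.
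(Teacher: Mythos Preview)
Your proposal is correct. The paper itself provides \emph{no} proof for this corollary at all: it is simply stated immediately after Lemma~\ref{lem:groupunlink} as an evident consequence of group-signature unlinkability, without any argument. What you have written is therefore not a different route but rather a full formalisation of a claim the authors leave implicit.

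Your contrapositive reduction --- embedding a proxy-anonymity adversary into $\textbf{Exp}_{\mathcal{A}}^{unlink}$ by calling $\mathsf{LoRSig}$ once, discarding the first component $\pi^*$, and using only $\pi^*_b$ as the anonymity challenge --- is sound and tight. The one point worth making explicit in a final write-up is that the proxy-anonymity game you introduce is not defined anywhere in the paper, so you are (legitimately) supplying that definition yourself; as long as you state this, the simulation-faithfulness concern you flag in your last paragraph is easily discharged, since both experiments share the identical setup ($\mathsf{Setup\_ProxyGr}$, two $\mathsf{Join\_ProxyGr}$ runs, same $\mathtt{PS}^*$) and $\mathcal{B}$'s $\mathsf{P\_Sign}$ oracle answers every query $\mathcal{A}_{anon}$ can pose.
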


\begin{lemma}[Multi-CCM unlinkability]
\label{lem:multiccmunlink}
\textsf{SPOT} satisfies the multi-CCM unlinkability requirement with respect to the common contact message structure.
\label{mccmunlink}
\end{lemma}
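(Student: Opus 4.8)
The plan is to argue directly from the structure of a common contact message, showing that the view of a curious server during the \textsc{Generation} phase is statistically independent of the identity of the user who generated any given $\mathtt{CCM}$. First I would pin down what $\mathcal{S}$ actually sees: by construction the only data reaching $\mathcal{S}$ are the common contact messages $\mathtt{CCM}^e_{AB} = \mathbf{H}(\mathtt{D}^e_{\mathcal{U}_A} \cdot \mathtt{D}^e_{\mathcal{U}_B})$, relayed anonymously through two distinct proxies, together with the pairs $(\mathtt{PS}^e_{AB}, \mathtt{PS'}^e_{AB})$ that $\mathcal{S}$ itself derives from them via $\mathsf{S\_PSign}_{\mathcal{S}}$. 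Crucially, the user identifier is never transmitted to $\mathcal{S}$: it is introduced only later, by the proxy inside $\mathsf{P\_Sign}_{\mathcal{P}}$, and is never forwarded to the server. Hence the server's linking task reduces to deciding, from the $\mathtt{CCM}$ values alone, whether two contacts share a common endpoint.

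Next I would exploit the freshness of the ephemeral identifiers. Each $\mathtt{D}^e_{\mathcal{U}}$ is sampled uniformly at random and is refreshed at every epoch, so across distinct contacts the inputs $\mathtt{D}^e_{\mathcal{U}_A} \cdot \mathtt{D}^e_{\mathcal{U}_B}$ to $\mathbf{H}$ are pairwise distinct except with negligible (birthday-bound) probability over $\mathbb{Z}_n$. Modelling $\mathbf{H}$ as a random oracle, distinct inputs yield independent, uniformly distributed outputs. Consequently every $\mathtt{CCM}$ observed by $\mathcal{S}$ is an independent uniform element of $\mathbb{Z}_n$ whose distribution does not depend on which user produced it; the same holds for the derived $(\mathtt{PS}, \mathtt{PS'})$ pairs, since $\mathsf{S\_PSign}_{\mathcal{S}}$ merely applies the server's own secrets to these uniform inputs.

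Finally I would conclude by a distribution argument. Since the family of common contact messages generated by a single fixed user is identically distributed to a family generated by distinct users --- both being collections of independent uniform elements of $\mathbb{Z}_n$ carrying no identity information --- the two experiments presented to the curious $\mathcal{S}$ are statistically indistinguishable. Therefore any PPT server predicts its linking guess correctly only with probability $\tfrac{1}{2} \pm \kappa(\lambda)$, which establishes the \emph{multi-CCM unlinkability} property.

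The main obstacle is the second step: making precise that the randomness of the EBIDs genuinely renders the hashed products independent of user identity. This requires both that the EBIDs be drawn freshly and uniformly per epoch and that $\mathbf{H}$ behave as a random oracle (or at least as a sufficiently strong extractor on the EBID product), and it requires bounding the probability that two honestly generated contacts collide on the same hash input, so that the claimed independence of the outputs is not spuriously broken.
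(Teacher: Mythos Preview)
Your proposal is correct and follows essentially the same line as the paper's own (very brief) sketch: both argue that since the EBIDs are fresh uniform randoms per epoch and $\mathbf{H}$ behaves as a random oracle/pseudo-random function, the $\mathtt{CCM}$ values seen by $\mathcal{S}$ are independent of user identity. Your version is simply more explicit---you spell out what $\mathcal{S}$ actually observes, note that the identifier $\mathtt{ID}_{\mathcal{U}}$ is injected only at the proxy and never reaches $\mathcal{S}$, invoke the birthday bound on input collisions, and phrase the conclusion as statistical indistinguishability---whereas the paper compresses all of this into two sentences.
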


\begin{sproof}
Let $\mathcal{A}$ be a successful adversary against the \emph{multi-CCM unlinkability} property.
Assume that $\mathcal{A}$ receives two messages $\mathtt{CCM_1} = \mathbf{H}(\mathtt{D}_i * \mathtt{D}_j)$ and  $\mathtt{CCM_2} = \mathbf{H}(\mathtt{D}_i * \mathtt{D}_k)$ (with $j \neq k$) meaning that user $U_i$ met $U_j$ and $U_k$, then $\mathcal{A}$ is not able to link $\mathtt{CCM_1}$ and $\mathtt{CCM_2}$ to the same user $\mathcal{U}_i$ as all EBIDs are randomly generated in each epoch $e$, and the hashing function $\mathbf{H}$ behaves as a pseudo-random function.
\end{sproof}


\begin{theorem}[Anonymity]
\label{theo:anon}
\textsf{SPOT} satisfies the anonymity property, in the sense of Definition \ref{def:anon}, if and only if, the CCM-unlinkability requirement is fulfilled.
\end{theorem}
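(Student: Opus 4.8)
The plan is to establish the stated equivalence by a single structural observation about the anonymity transcript, which then yields tight reductions in both directions. First I would dissect the adversary's view in $\textbf{Exp}_{\mathcal{A}}^{anon}$ and isolate the exact point at which the challenge bit $b$ enters. Every non-challenge oracle ($\mathsf{Set\_CCM}$, $\mathsf{S\_PSign}$, $\mathsf{P\_Sign}$) is $b$-independent and can be answered with the keys set up in the game, so the only $b$-dependent output comes from $\mathsf{LoRCU}$, which returns the triple $({\mathsf{CCM}}^*_{b}, \mathtt{M}^*_b, \pi^*_b)$. The crucial point is that user $\mathcal{U}_A$ is fixed across all queries and, by construction of $\mathsf{P\_Sign}$, the message is $\mathtt{M}^*_b = {\mathtt{ID}_{\mathcal{U}_A}}^{\mathtt{PS}^*_b}$, which involves only the identifier of $\mathcal{U}_A$ and never $\mathtt{ID}_{\mathcal{U}_b}$. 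Hence the identity of the contacted user $\mathcal{U}_b$ influences the whole triple solely through ${\mathsf{CCM}}^*_{b} = \mathbf{H}(\mathtt{D}^*_{\mathcal{U}_A} * \mathtt{D}^*_{\mathcal{U}_b})$.

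Next I would argue that, conditioned on the value of ${\mathsf{CCM}}^*_{b}$, the remaining components $\mathtt{M}^*_b$ and $\pi^*_b$ are perfectly simulatable independently of $b$. Indeed $\mathtt{M}^*_b = {\mathtt{ID}_{\mathcal{U}_A}}^{{\mathsf{CCM}}^*_{b}\, y_1 r_s + y_2}$ is a deterministic function of ${\mathsf{CCM}}^*_{b}$, the fixed identifier $\mathtt{ID}_{\mathcal{U}_A}$, the server key $(y_1,y_2)$ and the fresh randomness $r_s$, all of which are independent of $b$; likewise $\pi^*_b$ is a $\mathsf{NIWI}$ proof depending only on the public $\mathtt{M}^*_b$ and on the single fixed proxy's keys. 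Consequently the adversary's entire transcript is a randomized function of the sequence of challenge $\mathtt{CCM}$ values, and its advantage in predicting $b$ equals, up to a negligible term, its advantage in distinguishing the $\mathtt{CCM}$s generated with $\mathcal{U}_0$ from those generated with $\mathcal{U}_1$, i.e.\ in linking a challenge $\mathtt{CCM}$ to previously observed $\mathtt{CCM}$s of a given user.

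With this observation I would prove both directions. For the \emph{if} direction, I build a simulator $\mathcal{B}$ that, given a successful anonymity adversary $\mathcal{A}$, breaks the CCM-unlinkability of Lemma \ref{lem:multiccmunlink}: $\mathcal{B}$ generates all keys itself, answers the regular oracles directly, forwards each challenge EBID pair to its $\mathtt{CCM}$ challenger, reconstructs $\mathtt{M}^*_b$ and $\pi^*_b$ from the returned $\mathtt{CCM}$ exactly as above, and relays $\mathcal{A}$'s guess. Since the simulation is perfect, $\mathcal{B}$ inherits $\mathcal{A}$'s advantage, so CCM-unlinkability implies anonymity (Definition \ref{def:anon}). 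The \emph{only if} direction is symmetric: a distinguisher that links two $\mathtt{CCM}$s produced with the same contact partner is embedded inside an anonymity adversary by extracting the $\mathtt{CCM}$ component from each $\mathsf{LoRCU}$ reply and forwarding it to the linker, then translating its linkage decision into a guess of $b$. Combining the two reductions yields the claimed equivalence.

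I expect the main obstacle to be the second step: rigorously justifying that $\mathtt{M}^*_b$ and $\pi^*_b$ leak nothing about $b$ beyond ${\mathsf{CCM}}^*_{b}$. This requires verifying that $\mathtt{ID}_{\mathcal{U}_b}$ genuinely never appears outside the hashed EBID product, that the fresh server randomness $r_s$ renders $\mathtt{M}^*_b$ simulatable from the $\mathtt{CCM}$ value alone, and that the witness-indistinguishability of the $\mathsf{NIWI}$ proof (invoked in Lemma \ref{lem:groupunlink}) prevents $\pi^*_b$ from reintroducing any dependence on the hidden bit through the proxy keys. Once this conditional independence is established, the equivalence with CCM-unlinkability follows immediately.
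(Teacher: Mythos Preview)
Your proposal is correct and rests on the same core observation as the paper's sketch: in $\textbf{Exp}_{\mathcal{A}}^{anon}$ the challenge bit $b$ enters the adversary's view only through $\mathtt{CCM}^*_b$, because $\mathtt{M}^*_b = {\mathtt{ID}_{\mathcal{U}_A}}^{\mathtt{PS}^*_b}$ and $\pi^*_b$ depend on $\mathcal{U}_b$ solely via that value, so anonymity collapses to CCM-unlinkability. You are in fact more complete than the paper on two points. First, the paper's sketch argues only the contrapositive of the ``if'' direction (an anonymity breaker yields a CCM linker) and never treats the ``only if'' direction, whereas you give explicit reductions both ways. Second, you justify the conditional-independence of $(\mathtt{M}^*_b,\pi^*_b)$ given $\mathtt{CCM}^*_b$ rather than asserting it. One small simplification: since a \emph{single fixed} proxy is used throughout $\textbf{Exp}_{\mathcal{A}}^{anon}$, the proxy keys are $b$-independent, so you do not actually need the $\mathsf{NIWI}$ witness-indistinguishability of Lemma~\ref{lem:groupunlink} here; the distribution of $\pi^*_b$ is already fully determined by $\mathtt{M}^*_b$ and the fixed proxy witness, and simulation from $\mathtt{CCM}^*_b$ alone suffices.
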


\begin{sproof}
We prove that our proximity-based protocol \textsf{SPOT} satisfies the anonymity property using an \emph{absurdum} reasoning.
We suppose that an adversary $\mathcal{A}$ can break the anonymity of \textsf{SPOT}, in the sense of Definition \ref{def:anon}, by reaching the advantage $Pr[\textbf{Exp}_{\mathcal{A}}^{anon}(1^{\lambda})=1] \geq \frac{1}{2} \pm \kappa(\lambda)$. $\mathcal{A}$ is given the pair of public-private keys $(\mathtt{pk}_{\mathcal{HA}}, \mathtt{sk}_{\mathcal{HA}})$ of the health authority, the identifiers $\mathtt{ID}_\mathcal{U}$  of all users and a contact list of a particular user $\mathcal{U}_A$, obtained when relying on several sessions. Then, relying on the \emph{left-or-right} $\mathsf{LoRCU}$ oracle, $\mathcal{A}$ tries to distinguish the user $\mathcal{U}_b$ being in contact with $\mathcal{U}_A$, better than a flipping coin. That is, given the tuple $({\mathsf{CCM}}^*_{b}, \mathtt{M}^*_b, \pi^*_b)$, $\mathcal{A}$ successfully predicts the identifier $\mathtt{ID}_{\mathcal{U}_b}$. 
Obviously, $\mathcal{A}$ tries to identify user $\mathcal{U}_b$ relying on the message ${\mathsf{CCM}}^*_{b}$, since both $\mathtt{M}^*_b$ and $\pi^*_b$ are generated based on ${\mathsf{CCM}}^*_{b}$ and give no further information about the user $\mathcal{U}_b$. This refers to link the message ${\mathsf{CCM}}^*_{b}$ to its issuer $\mathcal{U}_b$. Thus, if $\mathcal{A}$ succeeds, this means that $\mathcal{A}$ is able to link two or several common contact messages to the same user, which contradicts the \emph{multi-CCM unlinkability} property previously discussed. As such, we prove that the adversary succeeds $\textbf{Exp}_{\mathcal{A}}^{anon}(1^{\lambda})$ with a probability $Pr = \frac{1}{2} \pm \kappa(\lambda)$, where $\kappa(\lambda)$ is negligible. Thus, \textsf{SPOT} satisfies anonymity.

\end{sproof}


\begin{theorem}[Anti-replay]
\label{theo:arep}
\textsf{SPOT} satisfies the anti-replay requirement and supports false positive hindrance, if the proposed scheme is unforgeable. 
\end{theorem}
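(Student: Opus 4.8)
The plan is to reduce both the anti-replay property and false-positive hindrance to the unforgeability of \textsf{SPOT} established in Theorem~\ref{theo:unforg2}, combined with the freshness built into the common contact messages. First I would make precise what a successful attack means here: a malicious user, possibly colluding with a corrupted proxy, breaks anti-replay or injects a false positive if it can submit a tuple $(\mathtt{CCM}, \mathtt{M}, \pi)$ to $\mathcal{HA}$ that passes all three checks of the \textsc{Verification} phase---infected status, $\mathsf{Sig\_Verify_{\mathcal{HA}}}$, and $\mathsf{CCM\_Verify_{\mathcal{HA}}}$---while the underlying $\mathtt{CCM}$ either was never genuinely processed by $\mathcal{S}$ (two matching copies relayed through two distinct proxies) or is a verbatim re-submission of information bound to a different session or a different user.

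Next I would split the analysis into the two ways such a tuple could arise. The first is forging the group signature $\pi$: this is excluded directly by the unforgeability of the underlying $\mathsf{GSIG}$ scheme, itself inherited from the $\mathsf{CSIG}$/$\mathsf{XSIG}$ structure-preserving signatures of \cite{XCGSIG}, so no proof verifying under $\mathsf{Sig\_Verify_{\mathcal{HA}}}$ exists for a message the proxy never signed. The second is forging the server's partial signature, i.e.\ producing $\mathtt{M}^*$ satisfying the verification equation~\eqref{newequverify}, $\mathtt{M}^* = Y_1^{t_{\mathcal{U}} \mathtt{PS'}} Y_2^{t_{\mathcal{U}}}$, for a $\mathtt{CCM}$ that $\mathcal{S}$ did not sign with a fresh $r_s$. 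Since $\mathtt{PS'} = \mathtt{CCM}\, r_s$ is kept secret at $\mathcal{S}$ and only $\mathtt{PS}$ is ever returned, this is exactly the event $\textbf{Exp}_{\mathcal{A}}^{unforg}=1$; by Theorem~\ref{theo:unforg2} it occurs with at most negligible probability under CDH, so I can invoke that reduction as a black box.

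For the replay dimension I would argue freshness of both $\mathtt{CCM}$ and $\mathtt{M}$. Because each $\mathtt{D}^e$ is drawn at random per epoch and $\mathbf{H}$ behaves as a pseudo-random function (as used in Lemma~\ref{lem:multiccmunlink}), the value $\mathtt{CCM} = \mathbf{H}(\mathtt{D}^e_A * \mathtt{D}^e_B)$ of a genuine contact cannot be matched by a stale message from another epoch, and $\mathcal{S}$'s duplicate-detection on the two relayed copies rejects a solitary resubmission. Crucially, $\mathtt{M} = \mathtt{ID}_{\mathcal{U}}^{\mathtt{PS}} = g_2^{t_{\mathcal{U}}(\mathtt{CCM}\,y_1 r_s + y_2)}$ is bound both to the victim's secret $t_{\mathcal{U}}$ and to the session randomness $r_s$; hence an adversary cannot transplant a genuine tuple onto a different user's identifier to manufacture a false positive against someone who was not in contact, nor re-use it across sessions, without recomputing a valid $\mathtt{M}^*$---which returns us to the forgery event already excluded.

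The hard part, and the step I would treat most carefully, is this replay/transplant argument rather than the signature-forgery reduction. The subtlety is that a replayed tuple is \emph{itself} valid, so security does not come from rejecting it outright but from the fact that its binding to $t_{\mathcal{U}}$ and to the $\mathcal{S}$-chosen $r_s$ forces any retargeting or desynchronization to produce a fresh $\mathtt{M}^*$ that passes~\eqref{newequverify} without access to $\mathcal{S}$'s secret $\mathtt{PS'}$---precisely the CDH-hard forgery of Theorem~\ref{theo:unforg2}. Contrapositively, any adversary that defeats anti-replay or injects a false positive yields an adversary against unforgeability, which completes the reduction and establishes the claim.
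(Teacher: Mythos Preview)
Your approach differs substantially from the paper's, and the difference exposes a gap. You frame everything as a reduction to unforgeability: any accepted replayed or injected tuple must involve either a $\mathsf{GSIG}$ forgery or a partial-signature forgery passing equation~\eqref{newequverify}. The paper's sketch, by contrast, is an operational case analysis that barely invokes unforgeability. It distinguishes: (i) a user re-inserts an old valid tuple $(\mathtt{CCM}^e,\mathtt{M}^e,\pi^e)$ at some epoch $e'>e+\Delta$; when $\mathcal{HA}$ asks $\mathcal{S}$ for the matching $\mathtt{PS}'$, the server no longer stores it (entries live only $\Delta$ days), so $\mathsf{CCM\_Verify}$ fails; (ii) the user pushes an old $\mathtt{CCM}^e$ through the full protocol via two proxies in a fresh epoch; the paper \emph{concedes} that $\mathcal{HA}$ will accept this, but argues it is harmless since no other user carries $\mathtt{CCM}^e$ in their list, hence no one's risk score is affected.

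Your gap is precisely case~(i). A verbatim re-submission of one's own old tuple involves no retargeting to another $t_{\mathcal{U}}$ and no new session randomness $r_s$, so nothing forces the adversary to ``produce a fresh $\mathtt{M}^*$''; the tuple is authentic and your reduction to $\textbf{Exp}_{\mathcal{A}}^{unforg}$ simply does not trigger. What actually blocks this replay is the $\Delta$-day retention window at $\mathcal{S}$, a system-level mechanism you never invoke. Your treatment of the harmlessness side (random EBIDs plus $\mathbf{H}$ imply no other user's list contains the replayed $\mathtt{CCM}$) does align with the paper's case~(ii), but note that the paper does not claim such a replay is \emph{rejected} by $\mathcal{HA}$: it grants that verification passes and argues the outcome is benign, which is strictly weaker than the forgery-based rejection you are trying to establish.
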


\begin{sproof}
To successfully replay a common contact message generated in an epoch $e$, in another epoch $e' \neq e$, a malicious user can perform in two ways. (i) The user reinserts, in his contact list, the tuple $(\mathtt{CCM}^e$, $\mathtt{M}^e$,  $\pi^e)$ generated in an epoch $e$. The reinsertion is performed in an epoch $e' > e + \Delta$\footnote{It makes no sense to reinsert an element in an epoch $e' < e + \Delta$, as duplicated messages will be deleted either by the server or at the user's end-device.}. Afterwards, the contact list is sent to $\mathcal{HA}$ when the user is infected. $\mathcal{HA}$ asks the server to provide the message $\mathtt{PS}'$ corresponding to $\mathtt{CCM}^e$. As the server has no entry corresponding to $\mathtt{CCM}^e$ in the last $\Delta$ days, the second verification performed by $\mathcal{HA}$ does not hold and the tuple is rejected. (ii) We assume that, in an epoch $e'> e + \Delta$, the user is able to replay a message $\mathtt{CCM}^e$ with two different proxies and he successfully receives the corresponding message $\mathtt{M}$ and the group signature $\pi$. Thus, when the user is infected, the health authority validates false positives, but this has no impact on the computation of the risk score, as no user has the same entry in his contact list. 
As such, we can prove the resistance of \textsf{SPOT} against replay attacks.
\end{sproof}

\section{Performance Analysis}
\label{sec:perf}

This section introduces \textsf{SPOT} test-bed, discusses the experimental results, presented in Table \ref{tab:perf}, and demonstrates the usability of the proposed construction for real world scenarios.

\subsection{\textsf{SPOT} Test-bed}

For our experiments, we developed a prototype of the \textsf{SPOT} protocol that implements the three phases \textsc{Sys\_Init}, \textsc{Generation} and \textsc{Verification} including the twelve algorithms\footnote{The source code is available at {\url{https://github.com/privteam/SPOT}}}. The tests are made on an Ubuntu $18.04.3$ machine - with an \emph{Intel} Core $i7 @1.30 GHz$ - 4 cores processor and $8 GB$ memory. The twelve algorithms were implemented based on JAVA version $11$, and the cryptographic library \textsl{JPBC}\footnote{\url{http://gas.dia.unisa.it/projects/jpbc/}}.
We evaluate the computation time of each algorithm relying on two types of bilinear pairings, i.e., \textsl{type A} and \textsl{type F}. The pairing \textsl{type A} is the fastest symmetric pairing type in the \textsl{JPBC} library constructed on the curve $y^2 = x^3 +x$ with an embedding degree equal to 2. The pairing \textsl{type F} is an asymmetric pairing type introduced by Barreto and Naehrig \cite{pairingJPBC}. It has an embedding degree equal to 12. For the two types of pairing, we consider two different levels of security i.e., 112-bits and 128-bits security levels recommended by the US National Institute of Standards and Technology\footnote{\url{http://keylength.com}} (NIST). 
\\
Based on the selected cryptographic library and the implementation of Groth-Sahai proofs\footnote{\url{https://github.com/gijsvl/groth-sahai}}, the \textsf{SPOT} test-bed is built with six main java classes, w.r.t. to the different entities of \textsf{SPOT}, referred to as \emph{TrustedAuthority.java}, \emph{GroupManager.java}, \emph{Proxy.java}, \emph{HealthAuthority.java}, \emph{User.java} and \emph{Server.java}. Each class encompasses the algorithms that are performed by the relevant entity as described in Section \ref{sec:sysmodel}.
In order to obtain accurate measurements of the computation time, each algorithm is run 100 times. Thus, the computation times represent the mean of the 100 runs while considering a standard deviation of an order $10^{-2}$.

\subsection{Communication and Computation Performances of SPOT}
\label{sec:ComCompPerf}

This section first proposes a theoretical analysis of the communication cost. Then, it presents the experimental results of the implementation of \textsf{SPOT} algorithms.

\begin{table*}[!ht]
\begin{center}
\caption{Computation time and communication overhead of SPOT algorithms}
\begin{threeparttable}

\resizebox{\textwidth}{!}{
\begin{tabular}{c|c|c|c|c|c|c|c}
\hline
     \multirow{2}{*}{Algorithm} & \multirow{2}{*}{Entity} &  \multirow{2}{*}{Synch/Asynch} & \multirow{2}{*}{Communication cost} & \multicolumn{4}{c}{Computation time (ms)}\\ \cline{5-8}
     &  &  &  & A/112-bits  &  A/128-bits &  F/112-bits  &  F/128-bits \\ \hline 
      
     $\mathsf{Set\_params}$  & $\mathcal{TA}$  & Asynch. & $|\mathbb{Z}_n| + |\mathbb{G}_1| + |\mathbb{G}_2| + |\mathbb{G}_3|$ & 874 & 2521 & 1230 & 1364 \\ \hline
     $\mathsf{HA\_Keygen}$  & $\mathcal{TA}$  & Asynch. & $|\mathbb{G}_1|$ & 59 & 123 & 12 & 16 \\ \hline
     $\mathsf{S\_Keygen}$  &  $\mathcal{TA}$ & Asynch. & $2|\mathbb{G}_2|$ & 119 & 244 & 24 & 31 \\ \hline
     $\mathsf{Setup\_ProxyGr}$  & $\mathcal{GM}$  & Asynch. & $21(|\mathbb{G}_1| + |\mathbb{G}_2|)$ & 1955 & 4075 & 346 & 451\\ \hline
     $\mathsf{Join\_ProxyGr}$  & $\mathcal{P}$/$\mathcal{GM}$  & Synch. & $\mathcal{P}: 8|\mathbb{G}_1| + 2|\mathbb{G}_2|$ / $\mathcal{GM}: 7(|\mathbb{G}_1| + |\mathbb{G}_2|)$ & 2861 & 6014 & 1159 & 1409\\ \hline
     $\mathsf{Set\_UserID}$  & $\mathcal{HA}$  & Synch. & $|\mathbb{G}_2|$ & 58 & 121 & 12 & 16 \\ \hline
     $\mathsf{Userkeygen}$  & $\mathcal{U}$  & Asynch. & $|\mathbb{G}_2|$ & 117 & 242 & 24 & 31 \\  \hline
     $\mathsf{Set\_CCM}$  &  $\mathcal{U}$ & Synch. & $|\mathbb{Z}_n|$ & 0.1 & 0.1 & 0.1 & 0.1 \\ \hline
     $\mathsf{S\_PSign}$ \tnote{a}  & $\mathcal{S}$  & Synch. & $|\mathbb{Z}_n|$ & 0.1 & 0.08 & 0.02 & 0.02 \\ \hline
     $\mathsf{P\_Sign}$ \tnote{a}  & $\mathcal{P}$  & Synch. & $6|\mathbb{G}_1| + 7|\mathbb{G}_2|$ & 19353 & 40371 & 3164 & 4170 \\ \hline
     $\mathsf{Sig\_Verify}$ \tnote{a}  &  $\mathcal{HA}$ & Asynch. & N.A.  & 6541 & 15406 & 31637 & 36892 \\ \hline
     $\mathsf{CCM\_Verify}$ \tnote{a}  & $\mathcal{HA}$  & Asynch. & N.A. & 174 & 360 & 148 & 190 \\ \hline

  \end{tabular}}
\end{threeparttable}
 \label{tab:perf}
\end{center}
\scriptsize NOTE: Synch./Asynch. indicates whether the algorithm must be run online (i.e., in real time) or offline (i.e., later); $^a$ indicates that the algorithm is performed on a single contact message that is generated by the $\mathsf{Set\_CCM}$ algorithm; $|\mathbb{G}_1|$ (resp. $|\mathbb{G}_2|$, $|\mathbb{G}_3|$ and $|\mathbb{Z}_n|$) indicates the size of an element in ${\mathbb{G}_1}$ (resp. ${\mathbb{G}_2}$, ${\mathbb{G}_3}$ and ${\mathbb{Z}_n}$); N.A. is the abbreviation for Not Applicable. 
\end{table*}

As presented in Table \ref{tab:perf}, the communication cost is measured according to the size of the elements ${\mathbb{G}_1}$, ${\mathbb{G}_2}$, ${\mathbb{G}_3}$ and ${\mathbb{Z}_n}$ exchanged between entities. $\mathsf{Setup\_ProxyGr}$ and $\mathsf{Join\_ProxyGr}$ are the most bandwidth consuming algorithms, however this result must be put into perspective as both algorithms are performed once. Other algorithms have acceptable communication overhead, in particular those performed repeatedly by the user, which proves the efficiency of \textsf{SPOT}.

From Table \ref{tab:perf}, it is worth noticing that the computation time depends on the selected pairings types and is strongly related to the security level. 
Some algorithms of the \textsc{Sys\_Init} phase are consuming but they are limited to only one execution from a powerful trusted authority. 
For the \textsc{Generation} phase, the most consuming algorithm is $\mathsf{P\_Sign}$ which requires 19 seconds (resp. 40 seconds) for pairing \textsl{type A} and 3 seconds (resp. 4 seconds) for pairing \textsl{type F}. The computation time of $\mathsf{Set\_CCM}$ and $\mathsf{S\_PSign}$ are negligible, which means that the user and the server are not required to have important computation capacities. Finally, to verify the correctness of a single contact message, the \textsc{Verification} phase requires approximately 7 seconds (resp. 15 seconds) for pairing \textsl{type A} and 32 seconds (resp. 37 seconds) for pairing \textsl{type F}. 

It is clear that $\mathsf{P\_Sign}$ and $\mathsf{Sig\_Verify}$  are the most consuming algorithms in terms of computation time as they include a large number of exponentiations and pairing functions, however this result must be put into perspective as both the proxy and the health authority are assumed to have advanced hardware features. Table \ref{tab:perf} also shows that the $\mathsf{Sig\_Verify}$ algorithm run with pairing \textsl{type A} is faster than with pairing \textsl{type F}, as this latter requires excessive memory allocation and deallocation. The $\mathsf{P\_Sign}$ algorithm has an opposite behavior where the execution with pairing \textsl{type F} is faster than pairing \textsl{type A}, which is compliant to the \textsl{JPBC} library benchmark\footnote{\url{http://gas.dia.unisa.it/projects/jpbc/benchmark.html}} showing that elementary functions of multiplication and exponentiation require less computation time for pairing \textsl{type F}.

From Table \ref{tab:perf}, we can deduce that the algorithms executed at the user's side, have very low computation and communication overhead, which confirms the usability of \textsf{SPOT}, even when being run on a smartphone with low capacities. For both consuming algorithms ($\mathsf{P\_Sign}$ and $\mathsf{Sig\_Verify}$) that are repeatedly run, some performance improvement means are proposed in the next subsection. 

\subsection{Improved Performances with Multithreading and Preprocessing}

For both computation consuming algorithms $\mathsf{P\_Sign}$ and $\mathsf{Sig\_Verify}$ (see Section \ref{sec:ComCompPerf}), in an effort to make the computation time as efficient as possible, although they run on powerful \textsf{SPOT} entities, we rely on a two step improvement:  
\begin{itemize}
    \item \textbf{Multithreading:} applied to both $\mathsf{P\_Sign}$ and $\mathsf{Sig\_Verify}$ algorithms. It enables simultaneous multiple threads execution (e.g., the computation of the different parts of the NIWI proof for the $\mathsf{P\_Sign}$ algorithm, the computation of either the different verification equations of the NIWI proof, or the two sides of each equation, for the $\mathsf{Sig\_Verify}$ algorithm).
    \item \textbf{Preprocessing:} applied only to $\mathsf{Sig\_Verify}$ algorithm. It enables to prepare in advance a value to be later paired several times, like the variables $\mathcal{U}$ and $\mathcal{V}$ which are used as input to pairing functions for each verification equation.  
\end{itemize}

\begin{figure*}
     \centering
     \begin{subfigure}[b]{0.443\textwidth}
         \centering
         \includegraphics[width=\textwidth]{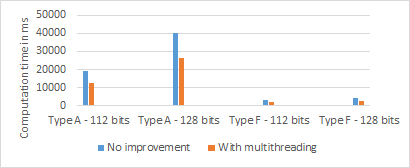}
         \caption{Influence of multithreading on $\mathsf{P\_Sign}$ algorithm}
         \label{fig:psign}
     \end{subfigure}
     \hfill
     \begin{subfigure}[b]{0.548\textwidth}
         \centering
         \includegraphics[width=\textwidth]{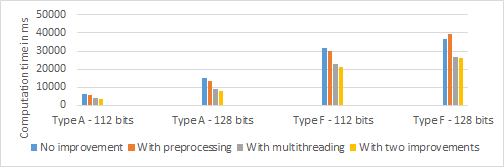}
         \caption{Influence of preprocessing or/and multithreading on $\mathsf{Sig\_Verify}$ algorithm}
         \label{fig:sigverif}
     \end{subfigure}
        \caption{Influence of improvements on $\mathsf{P\_Sign}$ and $\mathsf{Sig\_Verify}$ algorithms}
        \label{fig:improv}
\end{figure*}

Figure \ref{fig:improv} exposes the impacts of one or two combined improvements applied to $\mathsf{P\_Sign}$ and  $\mathsf{Sig\_Verify}$.
From Figure \ref{fig:psign}, we notice that multithreading reduces the computation time for $\mathsf{P\_Sign}$ of approximately $35\%$, for the two types of pairing and the two levels of security. 
For $\mathsf{Sig\_Verify}$, Figure \ref{fig:sigverif} shows that multithreading has a greater impact on the computation time (i.e., approximately $40\%$ for pairing type A and $28\%$ for pairing type F) than preprocessing (i.e., approximately $10\%$ for pairing type A and $5\%$ for pairing type F~\footnote{For type F - 128 bits, the preprocessing decreases the performances. This is due to the excessive memory allocation  and deallocation required  by the pairing type F.}). The two combined improvements ensure a gain of almost $50\%$ for pairing type A and $30\%$ for pairing type F.

\section{Conclusion}
\label{sec:conclusion}

In this paper, a novel secure and privacy-preserving proximity-based \textsf{SPOT} protocol for e-healthcare systems is introduced. The objective of \textsf{SPOT} is to  help governments and healthcare systems to deal with pandemics by automating the process of contact tracing, with security guarantees against fake contacts injection and privacy preservation for users. Thanks to the underlying network architecture relying on a centralized computing server and decentralized proxies, \textsf{SPOT} enables users to determine whether they were in close proximity with infected people, with no risk of false positive alerts. 
The strength of the paper is to provide a full concrete construction of \textsf{SPOT} which is proven to be secure and to support several privacy properties under standard assumptions.
Another strength of the contribution is a PoC of \textsf{SPOT} including a full implementation of the different algorithms, where practical computation costs measurements demonstrate the feasibility of our proposed protocol.\\
Further research will consider aggregating the verification of multiple contact messages in an effort to improve verification performances.



\bibliographystyle{plain}
\bibliography{bibliography}

\appendix

\section{Non-Interactive Witness Indistinguishable Proof }
\label{NIWI}

In this section we represent the Groth-Sahai NIWI proof scheme applied on pairing product equations with an asymmetric bilinear map. Witness-indistinguishability implies that the verifier of a group signature is not able to find the group member that has generated the signature. The $\mathsf{NIWI}$ scheme we consider, involves four PPT algorithms ($\mathsf{NIWI}.\mathsf{Setup}$, $\mathsf{NIWI}.\mathsf{CRS}$, $\mathsf{NIWI}.\mathsf{Proof}$, $\mathsf{NIWI}.\mathsf{Verify}$):

$\mathsf{NIWI}.\mathsf{Setup}$: This algorithm outputs a setup $(\mathtt{gk}, \mathtt{sk})$ such that $\mathtt{gk} = (n$, $\mathbb{G}_1$, $\mathbb{G}_2$, $\mathbb{G}_3$, $g_1$, $g_2$, $e)$ and $\mathtt{sk} = (p, q)$ where $n = p q$.\\

$\mathsf{NIWI}.\mathsf{CRS}$: This algorithm generates a common reference string $\mathtt{CRS}$. It takes $(\mathtt{gk}, \mathtt{sk})$ as inputs and produces $\mathtt{CRS}$ = $(\mathbb{G}_1,\mathbb{G}_2,\mathbb{G}_3, e, \iota_1, p_1, \iota_2, p_2, \iota_3, p_3, \mathcal{U}, \mathcal{V})$, where $\mathcal{U}= r g_1$, $\mathcal{V} = s g_2$ ; $r, s \in \mathbb{Z}^*_n$ and

\begin{center}

\begin{tabular}{c c c c c c}

     $\iota_1$: & $\mathbb{G}_1 \longrightarrow \mathbb{G}_1$ & $\iota_2$: & $\mathbb{G}_2 \longrightarrow \mathbb{G}_2$    & $\iota_3$: & $\mathbb{G}_3 \longrightarrow \mathbb{G}_3$  \\
       & $x \longmapsto x$ &   & $y \longmapsto y$  &  & $z \longmapsto z$  \\ 
       &   &  &  &  &   \\  
    $p_1$: & $\mathbb{G}_1 \longrightarrow \mathbb{G}_1$  & $p_2$: & $\mathbb{G}_2 \longrightarrow \mathbb{G}_2$   & $p_3$: & $\mathbb{G}_3 \longrightarrow \mathbb{G}_3$  \\
       & $x \longmapsto \lambda x$ &  & $y \longmapsto \lambda y$ &   & $z \longmapsto z^{\lambda}$  \\

  \end{tabular}
 \label{tab:crs}
\end{center}

$\mathsf{NIWI}.\mathsf{Proof}$: This algorithm generates a NIWI proof for satisfiability of a set of pairing product equations of the form of 
$${\prod}^l_{i=1}e(\mathcal{A}_{i}, \mathcal{Y}_i) {\prod}^k_{i=1}e(\mathcal{X}_{i}, \mathcal{B}_i) {\prod}^k_{i=1} {\prod}^l_{j=1} e(\mathcal{X}_{i}, \mathcal{Y}_j)^{\gamma_{ij}}=t$$ also written as 
$$(\Vec{\mathcal{A}} \cdot \Vec{\mathcal{Y}})(\Vec{\mathcal{X}} \cdot \Vec{\mathcal{B}})(\Vec{\mathcal{X}} \cdot \Gamma \Vec{\mathcal{Y}})=t$$
It takes as input $\mathtt{gk}$, $\mathtt{CRS}$ and a list of pairing product equations $\{(\Vec{\mathcal{A}_i}, \Vec{\mathcal{B}_i}, \Gamma_i, t_i)\}^N_{i=1}$ and a satisfying witness $\Vec{\mathcal{X}} \in \mathbb{G}_1^k$, $\Vec{\mathcal{Y}} \in \mathbb{G}_2^l$. To generate a proof over a pairing product equation, the algorithm, first, picks at random $\mathcal{R} \leftarrow Vec_k(\mathbb{Z}_n)$ and $\mathcal{S} \leftarrow Vec_l(\mathbb{Z}_n)$, commits to all variables as $\Vec{\mathcal{C}}:=\Vec{\mathcal{X}}+ \mathcal{R} \mathcal{U}$ and $\Vec{\mathcal{D}}:=\Vec{\mathcal{Y}}+ \mathcal{S} \mathcal{V}$, and computes 
$${\pi}= \mathcal{R}^{\top} \iota_2(\Vec{\mathcal{B}}) + \mathcal{R}^{\top} \Gamma \iota_2(\Vec{\mathcal{Y}}) + \mathcal{R}^{\top} \Gamma \mathcal{S} \mathcal{V}$$
$${\theta}= \mathcal{S}^{\top} \iota_1(\Vec{\mathcal{A}}) + \mathcal{S}^{\top} \Gamma^{\top} \iota_1(\Vec{\mathcal{X}})$$
The algorithm outputs the proof $(\pi, \theta)$. \\

$\mathsf{NIWI}.\mathsf{Verify}$: This algorithm checks if the proof is valid. It takes $\mathtt{gk}$, $\mathtt{CRS}$, $\{(\Vec{\mathcal{A}_i}, \Vec{\mathcal{B}_i}, \Gamma_i, t_i)\}^N_{i=1}$ and $(\Vec{\mathcal{C}_i}, \Vec{\mathcal{D}_i}, \{( \pi_i, \theta_i)\}^N_{i=1})$ as inputs and for each equation, checks the following equation:

\begin{equation}
    e(\iota_1(\Vec{\mathcal{A}_i}), \Vec{\mathcal{D}_i}) e(\Vec{\mathcal{C}_i}, \iota_2(\Vec{\mathcal{B}_i}))  e(\Vec{\mathcal{C}_i}, \Gamma_i \Vec{\mathcal{D}_i})= \iota_3(t_i) e(\mathcal{U}, \pi_i) e(\theta_i, \mathcal{V})
    \label{proofverif}
\end{equation}
The algorithm outputs 1 if the equation holds, else it outputs 0.

\end{document}